\newcommand{\mat}[2][ccccccccccccccccccccccc]{\left[\begin{array}{#1}
                                   #2\\
                                   \end{array}
                                   \right]}
\def\Pr{\text{Pr}}
 \newtheorem{theorem}{Theorem}[section]
\newtheorem{lemma}[theorem]{Lemma}
\newtheorem{remark}[theorem]{Remark}
\begin{document}
\title[2D Rayleigh--B{\'e}nard Convection]{Pattern Formations of 2D Rayleigh--B{\'e}nard Convection with No-Slip Boundary Conditions for the Velocity at the Critical Length Scales}
\author{Taylan Sengul, Jie Shen, Shouhong Wang}
\begin{abstract}
We study the Rayleigh-B{\'e}nard convection in a 2-D rectangular domain with no-slip boundary conditions for the velocity. The main mathematical challenge  is due to the no-slip boundary conditions, since the separation of variables for the linear eigenvalue problem which works in the free-slip case is no longer possible. 
It is well known that as the Rayleigh number crosses a critical threshold $R_c$,  the system bifurcates to  an attractor, which  is an $(m-1)$--dimensional sphere, where $m$ is the number of eigenvalues which cross zero as R crosses $R_c$. 
The main objective of this article is to derive a full classification of the  structure of this bifurcated attractor when $m=2$.  More precisely, we rigorously prove that when $m=2$, the bifurcated attractor is homeomorphic to a one-dimensional circle consisting of exactly  four or eight steady states and their connecting heteroclinic orbits. In addition, we show that  the mixed modes can be stable steady states for small Prandtl numbers.
\end{abstract}
\maketitle
\section{Introduction}
The Rayleigh-B\'enard convection problem is one of the fundamental problems in the physics of fluids. The basic phenomena of the Rayleigh-B{\'e}nard convection in horizontally extended systems are widely known. The influence of  the side walls, although not studied as throughly as the horizontally extended case, is of practical importance for engineering applications. 

In this paper we study the Rayleigh-B{\'e}nard convection in a 2-D rectangular domain with no-slip boundary conditions for the velocity. This problem is also closely related to the problem of infinite channel with rectangular cross-section which has been studied by  Davies-Jones \cite{davies70}, Luijkx--Platten \cite{luijkx1981} and Kato--Fujimura \cite{kato2000} among others.

The linear aspects of the problem we consider in this paper have been studied by Lee--Schultz--Boyd \cite{lee1989}, Mizushima \cite{mizushima95} and Gelfgat \cite{gelfgat99}. In these papers, the critical Rayleigh number and the structure of the critical eigenmodes have been studied for small aspect ratio containers.

From dynamical transition and pattern formation point of view,  Ma and Wang \cite{ma2004dynamic, ma2007rayleigh} proved that under some general boundary conditions, the problem always undergoes a dynamic transition  to an attractor $\Sigma_R$ as the Rayleigh number $R$ crosses the first critical Rayleigh number $R_c$. They also proved that the bifurcated attractor, homological to $S^{m-1}$,  where $m$ is the number of critical eigenmodes.

In the 2-D setting that we consider, $m$ is either 1 or 2 and the latter case can only happen at the critical length scales where two modes with wave numbers $k$ and $k+1$ become critical simultaneously.  When $m=1$, the structure of $\Sigma_R$ is trivial which is merely a disjoint union of two attracting steady states. Thus our task in this paper is to classify the structure of the attractor when $m=2$. This has been studied recently in \cite{sengulRB} for the 3D Rayleigh-B{\'e}nard problem where the boundaries were assumed to be free-slip for the velocity and the wave numbers of the critical modes were assumed to be equal.

The main mathematical challenge in this paper is due to the no-slip boundary conditions  since the separation of variables for the linear eigenvalue problem which works in the free-slip case is not possible anymore. To overcome this difficulty, the main approach for our study is to combine rigorous analysis and numerical computation using spectral method.  

As we know, spectral methods have long been used to address the hydrodynamic instability problems. In fact, in his seminal work \cite{orszag71}, Orszag studied the classical Orr-Sommerfeld linear instability problem using a Chebyshev-tau method. In this paper, to treat the linear eigenvalue problem, we employ a Legendre-Galerkin method where compact combinations of Legendre polynomials, called generalized Jacobi polynomials, satisfying all the boundary conditions are used as trial functions. The main advantage of our Legendre-Galerkin method  is that  the resulting matrices are sparse which allows a very efficient and accurate solution of the linearized problem; see also Hill--Straughan \cite{hill2006legendre} and Gheorghiu--Dragomirescu \cite{gheorghiu2009spectral}.

Once the eigenpairs of the linear problem are identified, the transition analysis is carried out by reducing the infinite dimensional system to the center manifold in the first two critical eigendirections. The coefficients of this reduced system are calculated numerically. Our main results are described below. 

We first classify the eigenmodes into four classes according to their parities using the symmetry of the problem. Then we numerically show that the first two unstable modes are always parity class one or two. Then we study the transition near the critical length scales where two eigenvalues become positive simultaneously. Next, we rigorously prove that the local attractor at small supercritical Rayleigh numbers is in fact homeomorphic to the circle which has four or eight steady states with half of them as stable points and the rest as saddle points. The critical eigenmodes are always bifurcated steady states on the attractor and when the attractor has eight steady states, the mixed modes which are superpositions of the critical eigenmodes are also bifurcated.

Second, let $\beta_1$ and $\beta_2$ denote the two largest eigenvalues of the linearized problem. We find that a small neighborhood of $\beta_1$ = $\beta_2=0$ in the $\beta_1$--$\beta_2$ plane can be separated into several sectors with different asymptotical structures. In particular, we find that there is a critical Prandtl  number $\Pr_c$ for the first two critical length scales $L=1.5702$ and $L=2.6611$, such that for $\Pr<\Pr_c$, there is a sector in this plane for which mixed modes are stable fixed points of the attractor. For $\Pr>\Pr_c$, the mixed modes are never stable and instead there is a sector in this plane in which both of the critical eigenmodes coexist as stable steady states. In this case, the initial conditions determine which one of these eigenmodes will be realized. The critical Prandtl number is around $0.14$ for the first critical length scale $L=1.5702$ and around $0.05$  for the second critical length scale $L=2.6611$. For higher critical length scales we found that mixed modes are never stable points of the attractor.

Third, recently Ma--Wang has developed the dynamic transition theory to study transition and bifurcation problems in nonlinear sciences; see \cite{ptd}. This paper is a first attempt to combine this theory with the numerical tools of the spectral methods to study the detailed structure  of the transition and pattern formation.

The paper is organized as follows. In Section 2, the governing equations and the functional setting of the problem is discussed. In Section 3, linear eigenvalue problem is studied. Section 4 states the main theorem. Section 5 is devoted to the proof of the main theorem. In Section 6, we demonstrate a method to compute the coefficients of the reduced system. And the last section discusses the results obtained by our analysis. 
\section{Governing Equations and the Functional Setting}
Two dimensional thermal convection with no-slip, perfectly conducting boundaries can be modeled by the Boussinesq equations. The governing equations on the rectangular domain $\Omega = (0,L)\times(0,1) \in \mathbb{R}^2$ read as
\begin{equation} \label{rbm1}
\begin{aligned}
& \frac{\partial \mathbf{u}}{\partial t}+(\mathbf{u}\cdot \nabla ) \mathbf{u} 
=  -\Pr(\nabla p+\Delta \mathbf{u})+\sqrt{R} \sqrt{\Pr}\, \theta \, {\bf k} , \\ 
& \frac{\partial \theta }{\partial t}+(\mathbf{u}\cdot \nabla) \theta  = \sqrt{R} \sqrt{\Pr} w+\Delta \theta, \\ 
& \nabla \cdot\mathbf{u} =0.
\end{aligned}
\end{equation}
Here $\mathbf{u} = (u,w)$ is the velocity field, $\theta$ is the temperature field and $p$ is the pressure field. These fields represent a perturbation around the motionless state with a linear temperature profile. The dimensionless numbers are the Prandtl number Pr and the Rayleigh number R which is also the control parameter. $\mathbf{k}$ represents the unit vector in the z-direction.

The equations \eqref{rbm1} are supplemented with no-slip boundary conditions for the velocity and perfectly conducting boundary conditions for the temperature.
\begin{equation} \label{bc}
\mathbf{u} = \theta = 0, \qquad \text{on $\partial \Omega$}.
\end{equation}

For the functional setting, we define the relevant function spaces:
\begin{equation}\label{func spaces}
\begin{aligned} 
& H =\left\{ (\mathbf{u},\theta)\in L^{2}\left(
\Omega,\mathbb{R}^3 \right) :\nabla\cdot\mathbf{u}=0,\mathbf{u}\cdot n\mid_{\partial \Omega}=0\right\},\\
& H_1 =\left\{ (\mathbf{u},\theta)\in H^{2}\left(
\Omega, \mathbb{R}^3 \right) :\nabla\cdot\mathbf{u}=0,\mathbf{u} \mid_{\partial \Omega}= 0, \, \theta \mid_{\partial \Omega}=0\right\}.
\end{aligned}
\end{equation}
For $\phi=({\bf u},\theta)$, let $G:H_1\rightarrow H$ and $L_R:H_1\rightarrow H$ be defined by
\begin{equation}\label{operators}
\begin{aligned}
& L_R\phi=\left(\mathcal{P}(\Pr \Delta {\bf u}+\sqrt{R}\sqrt{\Pr}\, \theta {\bf k}),\, \sqrt{R} \sqrt{\Pr} \,w+\Delta\theta \right),\\
& G(\phi)=-\left(\mathcal{P}({\bf u}\cdot\nabla){\bf u} ,\,({\bf u}\cdot\nabla)\theta)\right),
\end{aligned}
\end{equation}
with $\mathcal{P}$ denoting the Leray projection onto the divergence-free vectors.

The equations \eqref{rbm1}--\eqref{bc} supplemented with initial conditions can be put into the following abstract ordinary differential equation:
\begin{equation} \label{functional}
\frac{d\phi}{dt} = L_R \phi +G(\phi), \qquad \phi(0)=\phi_0.
\end{equation}

For results concerning the existence and uniqueness of solutions of \eqref{functional}, we refer  to Foias, Manley, and Temam \cite{foias87}. 

Finally for $\phi_i=({\bf u}_i,\theta_i)$, ${\bf u}_i = (u_i,w_i)$, $i=1,2,3$ we define the following trilinear forms.
\begin{equation} \label{bilinear}
\begin{aligned}
& G(\phi_{1},\phi_{2},\phi_{3})=-\int_{\Omega}({\bf u}_1 \cdot\nabla){\bf u}_2\cdot{\bf u}_3 dx dz-\int_{\Omega}({\bf u}_1\cdot\nabla)\theta_{2}\cdot \theta_{3} dx dz,\\
& G_s(\phi_1,\phi_2,\phi_3)= G(\phi_1,\phi_2,\phi_3)+ G(\phi_2,\phi_1,\phi_3).
 \end{aligned}
\end{equation}

\section{Linear Analysis}
We first study the eigenvalue problem $L_R \phi = \beta \phi$ which reads as
\begin{equation} \label{lineig}
\begin{aligned}
& \Pr ( \Delta u -\frac{\partial p}{\partial x}) = \beta u, \\
& \Pr ( \Delta w -\frac{\partial p}{\partial z}) + \sqrt{R}\sqrt{\Pr}\, \theta = \beta w, \\
& \Delta \theta + \sqrt{R} \sqrt{\Pr}\,w = \beta \theta,\\
& \text{div} {\bf u} = 0,\\
& u = \theta = 0, \quad \text{at } \partial \Omega.
\end{aligned}
\end{equation}
Below we list some of the properties of this eigenvalue problem.
\begin{itemize}
\item[1)] The linear operator $L_R$ is symmetric. Hence the eigenvalues $\beta_i$ are real and the eigenfunctions $\phi_i$ are orthogonal with respect to $L^2$--inner product. Moreover there is a sequence 
\[
0<R_1\leq R_2 \leq \cdots
\] 
such that $\beta_i(R_i) = 0$. $R_i$ is found by setting $\beta =0$ in \eqref{lineig}. In this case the problem becomes an eigenvalue problem with $\sqrt{R}$ as the eigenvalue. 
\item[2)] We have
\[
\beta_i(R)\overset{>}{\underset{<}{=}}0 \, \text{if } R\overset{>}{\underset{<}{=}}R_i,
\]
which can be seen by computing the derivative of $\beta_i$ with respect to $R$ at $R=R_i$. 
\begin{equation} \label{dbeta}
\frac{d\beta_i}{dR} \mid_{R=R_i}  =  \frac{1}{\sqrt{R_i}}\frac{ \sqrt{\Pr}\int_{\Omega} \theta_i w_i}{\int_{\Omega} u_i^2+ w_i^2 + \theta_i^2},
\end{equation}
where $(u_i,w_i,\theta_i)$ is the ith eigenfunction. Also at $R=R_i$, by the third equation in \eqref{lineig}$, w_i = -R_i^{-1/2} \Pr^{-1/2} \Delta \theta_i$ as $\beta_i(R_i)=0$. Plugging these into \eqref{dbeta} and integrating by parts, we see that
\[
\frac{d\beta_i}{dR} \mid_{R=R_i}  =  \frac{\int_{\Omega}  |\nabla\theta_i|^2  dx}{R_i\int_{\Omega} {(|u_i|^2  + |w_i|^2 + |\theta_i|^2 )dx}}>0
\]

\item[3)] We denote the critical Rayleigh number $R_c = R_1$. That is
\begin{equation} \label{PES}
\begin{aligned}
&  \beta_i(R)
\begin{cases}  
<0& \text{if $R<R_c $,}\\
=0& \text{if $R=R_c $,}\\
>0& \text{if $R>R_c $.}\\
\end{cases}
&& i=1,\dots,m\\
& \beta_i(R_c)<0, && i>m.
\end{aligned}
\end{equation}
$m$ in \eqref{PES} does not depend on the Prandtl number Pr but only on $L$. To see this, one makes the change of variable $\theta = \sqrt{\Pr} \, \theta'$ so that the solution of \eqref{lineig} for the eigenvalue of $\beta =0$ is independent of Pr. By simplicity of the first eigenvalue (see Theorem 3.7 in Ma--Wang \cite{bifbook}), for almost every value of $L$ except a discrete set of values, $m$ in \eqref{PES} is 1.
\end{itemize}

Introducing the streamfunction $\psi_z = u$, $\psi_x = -w$, we can eliminate the pressure $p$ from  the linear eigenvalue problem \eqref{lineig}.
\begin{equation}\label{rb2d01}
\begin{aligned} 
& \Pr \Delta^2\psi-\sqrt{R} \sqrt{\Pr}\theta_x = \beta(R) \, \Delta \psi,\\
-& \sqrt{R} \sqrt{\Pr} \psi_x+\Delta\theta= \beta(R) \theta,\\
& \psi=\frac{\partial \psi}{\partial n}=\theta=0 \text{ on }\partial \Omega.
\end{aligned}
\end{equation}

The linear equations \eqref{lineig} satisfy several discrete symmetries which may be found from the known non-trivial groups of continuous Lie symmetries of the field equations \eqref{rbm1}; see (Hydon\cite{hydon2000symmetry}, Marques--Lopez--Blackburn \cite{marques2004}). However, for the problem we consider, it can be easily verified that the linear equations have reflection symmetries about the horizontal and vertical mid-planes of the domain. Thus we can classify the solutions of the linear problem into four classes with different parities which are as defined in Table~\ref{tab:parity} where, for example, $\psi(o,e)$ means that $\psi$ is odd in the $x$-direction and even in the $z$-direction.
\renewcommand{\arraystretch}{1.5} 
\begin{table}
\centering
\begin{tabular}{ | c | c | c | c |}
\hline
Class 1 & Class 2 & Class 3 & Class 4\\
\hline 
 $\psi(e,e)$, $\theta(o,e)$ & $\psi(o,e)$, $\theta(e,e)$ & $\psi(e,o)$, $\theta(o,o)$ &  $\psi(o,o)$, $\theta(e,o)$ \\ \hline
\end{tabular}
\caption{\label{tab:parity} Possible parity classes of the eigenfunctions of the linear operator.}
\end{table}

We will employ a Legendre-Galerkin method (cf. Shen \cite{shen1994efficient}, Shen--Tang--Wang \cite{STW2011}) to solve the linear eigenvalue problem \eqref{rb2d01}. For this, first we transform the domain with the change of variable
\[
(x,z) \in  (0,L)\times(0,1) \rightarrow (X,Z) = (\frac{2x}{L}-1,2z-1) \in (-1,1)^2.
\]
The approximate solutions $(\psi^N,\theta^N)$ of \eqref{rb2d01} will be sought in the finite dimensional space 
\[
X^N = \text{span} \{ (e_j(x) e_k(z),f_l(x)f_m(z)) \mid  j,k,l,m=0,1,\dots,N-1\} , 
\]
where $e_j$ and $f_j$ are generalized Jacobi polynomials (cf. Guo--Shen--Wang\cite{GSW06}, Shen--Tang--Wang \cite{STW2011}) which satisfy the boundary conditions
\[
e_i(\pm 1)=De_i(\pm 1) = f_i(\pm 1) =0.
\]
Here $D$ denotes the derivative.
The polynomials $e_i$ and $f_i$ are defined as in Chapter 6 of Shen--Tang--Wang \cite{STW2011},
\begin{equation} \label{J11}
f_i(z) = L_i(z) - L_{i+2}(z),
\end{equation}
\begin{equation} \label{J22}
e_i(z)= \frac{ L_i(z)-\frac{4i+10}{2i+7}L_{i+2}(z)+\frac{2i+3}{2i+7}L_{i+4}(z) }{(2i+3)(4i+10)^{1/2}},
\end{equation}
where $L_i$ is the $i^{\text{th}}$ Legendre polynomial. The coefficient of $e_i$ guarantees that $(D^2e_i,D^2e_j) = \delta_{ij}$.

We write the approximate solutions of the equation \eqref{rb2d01} with coefficients to be determined by
\begin{equation}\label{exp1}
\psi^N = \sum_{j=0}^{N_x-1} \sum_{k=0}^{N_z-1} \tilde{\psi}^N_{jk} e_j(x) e_k (z), \quad \theta^N = \sum_{j=0}^{N_x-1} \sum_{k=0}^{N_z-1} \tilde{\theta}^N_{jk} f_j(x) f_k(z).
\end{equation}
Here $N=2N_x N_z$ denotes the total degrees of freedom.

Let us define for $i,j =0\dots,m-1$,
\begin{equation*}
\begin{aligned}
& (A_1^m)_{ij} = (D^2 e_i,D^2 e_j) = \delta_{ij}, && (A_2^m)_{ij} = (D^2 e_i, e_j)= -(De_j,De_i), \\
& (A_3^m)_{ij} = (e_i,e_j), && (A_4^m)_{ij} = (e_i,f_j), \\
& (A_5^m)_{ij} = (D^2 f_i,f_j), && (A_6^m)_{ij} = (f_i,f_j), \\
& (A_7^m)_{ij} = (Df_i,e_j), 
\end{aligned}
\end{equation*}
and for $j=0,\dots,N_x-1$, $k=0,\dots,N_z-1$,
\[
\tilde{\psi}^N=\{\tilde{\psi}^N_{jk}\}, \tilde{\theta}^N=\{\tilde{\theta}^N_{jk}\}.
\]
Using the following property of the Legendre polynomials
\begin{equation} \label{legformula}
(2i+3) L_{i+1} = D(L_{i+2}-L_i),
\end{equation}
it is easy to see that:
\begin{equation}\label{computeAij}
D e_i(z) = \frac{L_{i+3}-L_{i+1}}{\sqrt{4i+10}}, \,  D^2e_i(z)=\sqrt{\frac{2i+5}{2}} L_{i+2}(z), \, Df_i = -(2i+3) L_{i+1}.
\end{equation}
By \eqref{computeAij}, it is easy to determine the elements of the matrices $A_k$ by the orthogonality of the Legendre polynomials. In particular, the matrices $A_1^m,\dots A_7^m$ are banded, and except for $A_4^m$ and $A_7^m$, they are symmetric. 

Putting \eqref{exp1} into \eqref{rb2d01}, multiplying the resulting equations by $e_m(x)e_n(z)$ and $f_m(x)f_n(z)$ respectively and integrating over $-1\leq x \leq 1$, $-1\leq z \leq1$ to obtain
\begin{equation} \label{rb2d4}
B^N \bar{x}^N - \sqrt{R} C^N \bar{x}^N = \beta^N(R) D^N\bar{x}^N.
\end{equation}

Here: 
\begin{equation}\label{rb2d5}
\begin{aligned} 
& B^N = \mat{\Pr \,X_1 & 0 \\ 0 & X_3}_{N \times N}, && C^N = \mat{0 & \sqrt{\Pr}\,X_2 \\ -\sqrt{\Pr}\,X_2^T & 0}_{N \times N}, \\
& D^N = \mat{X_4 & 0 \\ 0 & X_5}_{N \times N}, && \bar{x}^N = \mat{\text{vec}(\tilde{\psi}^N) \\ \text{vec}(\tilde{\theta}^N)}_{N\times 1},
\end{aligned}
\end{equation}
where
\begin{equation}\label{rb2d6}
\begin{aligned} 
& X_1 = \frac{2^4}{L^4} A_3^{N_z} \otimes A_1^{N_x} + \frac{2^5}{L^2} A_2^{N_z} \otimes A_2^{N_x} + 2^4 A_1^{N_z} \otimes A_3^{N_x}, \\
& X_2 =   \frac{2}{L}  A_4^{N_z} \otimes (A_7^{N_x})^T, \quad X_3 = \frac{2^2}{L^2} A_6^{N_z} \otimes A_5^{N_x} + 2^2 A_5^{N_z} \otimes A_6^{N_x},\\
& X_4 = \frac{2^2}{L^2} A_3^{N_z} \otimes A_2^{N_x} + 2^2 A_2^{N_z} \otimes A_3^{N_x}, \quad X_5 = A_6^{N_z} \otimes A_6^{N_x}.
\end{aligned}
\end{equation}
In \eqref{rb2d5} and \eqref{rb2d6} we use the following notations. For  a $m\times k$ matrix $M$, $\text{vec}(M)$ is the $mk\times 1$ column vector obtained by concatenating the columns $M_i$ of $M$, i.e.
\[
\text{vec}(\mat{M_1 & M_2 & \cdots & M_k}) = \mat{M_1 & M_2 & \cdots & M_k}^T.
\]
$0$ stands for the zero matrix and $A \otimes B = \{ a_{ij} B \}_{i,j = 0,1,\dots,q}$ is the Kronecker product of $A$ and $B$. To obtain \eqref{rb2d4}, we used the following properties of the Kronecker product.
\[
\text{vec}(A X B) = (B^T \otimes A)\text{vec}(X), \quad (A \otimes B)^T = A^T \otimes B^T.
\]

We note that the matrices $B^N$, $C^N$ and $D^N$ in \eqref{rb2d4} are sparse, $B^N$ and $D^N$ are symmetric while $C^N$ is skew-symmetric.

From our linear analysis, we find the following results.
\begin{itemize}
\item Our numerical analysis suggest that $N_x = 6+2k \approx 6+ 2L $  and $N_z =8$ is enough to resolve the critical Rayleigh number and the first critical mode which has $k$ rolls in its stream function. We have checked that increasing $N_x$ and $N_z$ by two only modifies the fourth or fifth significant digit of the result.
\item In Figure~\ref{fig:first5evecs}, the first critical mode is shown for the length scales $L=1,\dots,5$. Note that the first critical stream function and the temperature distribution has always even parity in the z-direction while their x-parity alternates between odd and even as the length scale increases. As observed in Mizushima \cite{mizushima95}, we also verify the existence of the Moffatt vortices on the corners of the domain which are due to corner singularities as shown in Figure~\ref{fig:moffattvortex}.
\begin{figure}
\centering
\includegraphics[scale=0.4]{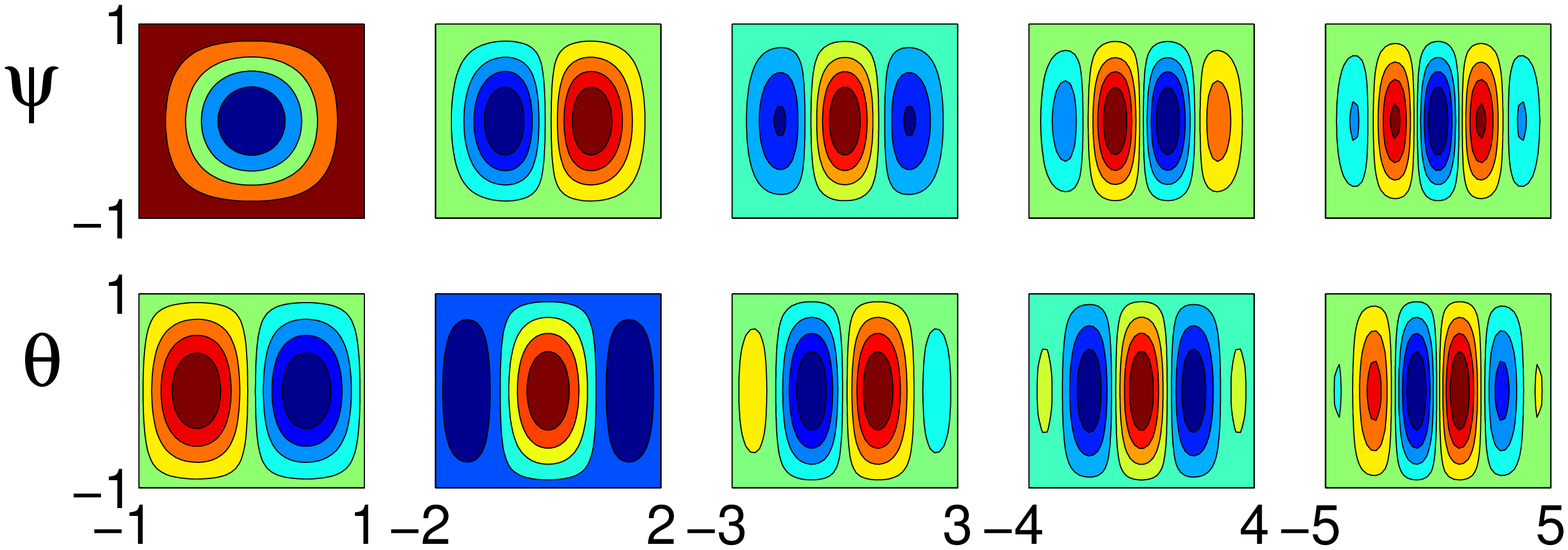}
\caption{ \label{fig:first5evecs}} $\psi$ (on top) and $\theta$ (on bottom) of the first critical mode on $(-L,L)\times(-1,1)$ for $L=1,\dots,5$.
\end{figure}
\begin{figure}
\centering
\includegraphics[scale=.3]{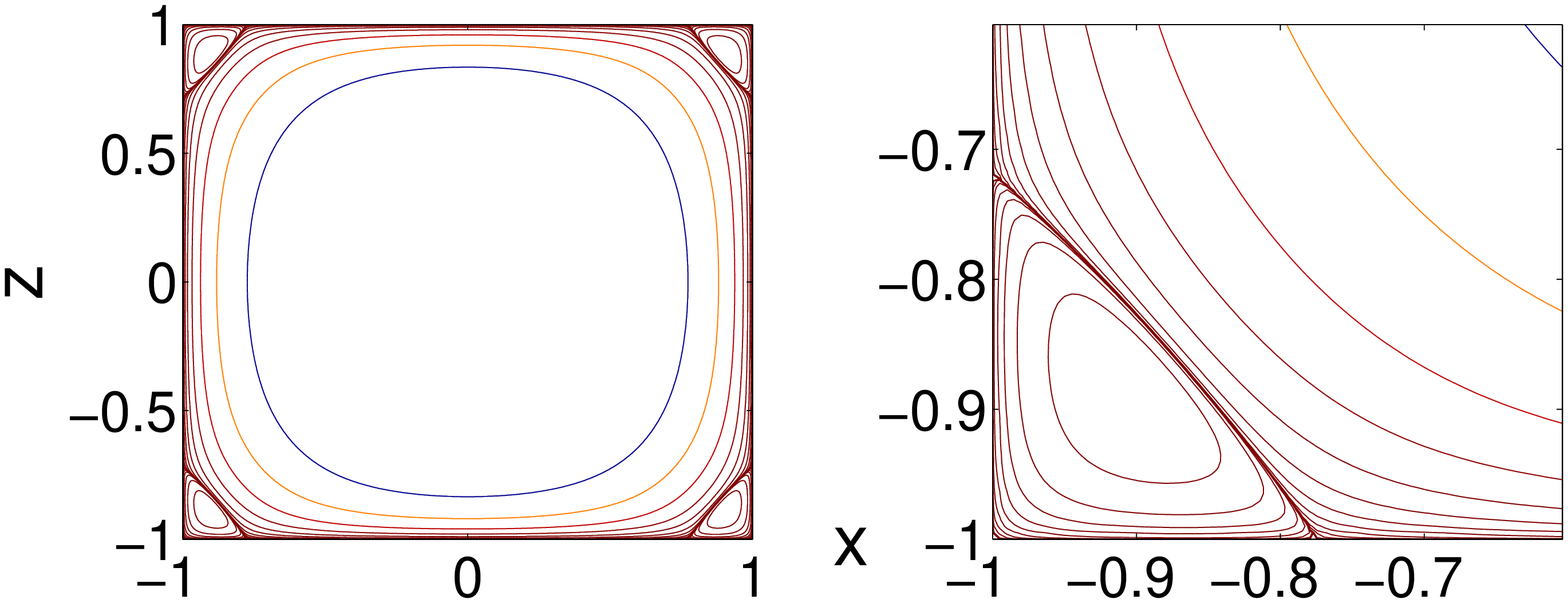}
\caption{ \label{fig:moffattvortex}} The left figure shows the plot of the first critical stream function for $L=1$ (the top left plot in Figure~\ref{fig:first5evecs}). The right figure shows the enlarged plot at the corner.
\end{figure}
\item For $L<21$ we observed that $m$ in \eqref{PES} is either 1 or 2. Moreover, $m=2$ only at the critical length scales which are  given in Table~\ref{asymp}. The results found are in agreement with those  in Mizushima \cite{mizushima95} and Lee-Schultz-Boyd \cite{lee1989}.  
\begin{table}
\begin{center}
\begin{tabular}{ |l | l|l|l||l|l|l|l|}
\hline
\textbf{L$_c$}&\textbf{k}&\textbf{R$_c$}&\textbf{N$_x$}&\textbf{L$_c$}&\textbf{k}&\textbf{R$_c$}&\textbf{N$_x$}\\\hline
1.5702&1&3086.6554&8&6.7711&6&1764.3754&18\\\hline
2.6611&2&2113.776&10&8.7992&8&1740.9174&22\\\hline
3.7048&3&1906.3395&12&10.8229&10&1729.5398&26\\\hline
4.7329&4&1826.4099&14&15.8738&15&1717.805&36\\\hline
5.7541&5&1786.8833&16&20.9197&20&1713.5226&46\\\hline
\end{tabular}
\caption{\label{asymp} At $L = L_c$, two modes become unstable. One of the modes has $k$ and the other one has $k+1$ rolls in x-direction in their stream functions. The critical Rayleigh number at this length scale is $R_c$. $N_x$ and $N_z=8$ are the number of polynomials used in the $x$ and $z$ directions respectively. }
\end{center}
\end{table}
\item The marginal stability curves of the first few critical eigenvalues are given in Figure~\ref{fig:par34muchbigger}. The figure demonstrates that the parities of the first two critical modes can only be of parity class one or two as given by Table~\ref{tab:parity}.

\begin{figure}
\centering
\includegraphics[scale=.3]{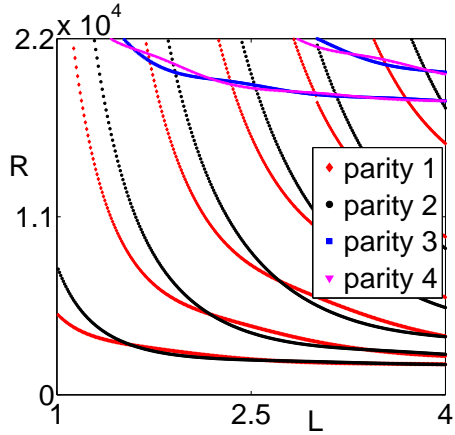}
\caption{ The marginal stability curves of the first few eigenvalues with eigenfunctions of different parity classes.  \label{fig:par34muchbigger}}
\end{figure}

Also it is seen in these figures that there is a repulsion of the eigenvalues. Namely the neutral stability curves of the same parity type do not intersect each other. Such a repulsion does not occur for free-slip boundary conditions. This repulsion arises from a structural instability of the transform of matrices into a Jordan canonical form and a detailed analysis can be found in Mizushima--Nakamura \cite{mizushima2002repulsion}.

\end{itemize}

\section{Main Theorem}
Let $m$ be the number of modes which become critical as the first Rayleigh number $R_c$ is crossed as given by \eqref{PES}. Ma and Wang \cite{ma2004dynamic, ma2007rayleigh} proved that under some general boundary conditions, the problem has an attractor $\Sigma_R$ which bifurcates from $(0,R_c)$ as $R$ crosses $R_c$. They also proved that the dimension of the bifurcated attractor is $m-1\leq \text{dim}(\Sigma_R) \leq m$. When $m=1$, the structure of $\Sigma_R$ is trivial which is merely a disjoint union of two attracting steady states.

As stated before, in our problem $m$ is either 1 or 2. And the latter case can only happen at a critical length scale $L_c$ where two eigenmodes with consecutive wave numbers become critical.

Numerically, it turns out that the critical Rayleigh numbers for modes with parity 3 or 4 are much greater than those for modes with parity 1 or 2. This can be seen from the Figure~\ref{fig:par34muchbigger}. 

We will assume the following.
\begin{equation} \label{assumptions}
\left\{ 
\begin{aligned}
\text{1. } & \text{$(\beta_1,\phi_1),\, (\beta_2,\phi_2)$ are the first two critical eigenpairs}.\\
\text{2. } & \text{$\phi_1$ has wave number $k$, $\phi_2$ has wave number $k+1$}\\
& \text{ where $k$ is a positive integer.} \\
\text{3. } & \text{One of the eigenmodes $\{\phi_1$, $\phi_2\}$ is of parity class 1, }\\
& \text{while the other is of parity class 2 as given in Table~\ref{tab:parity}}.
\end{aligned}\right.
\end{equation}

Let $y_1$ and $y_2$ be the amplitudes of $\phi_1$ and $\phi_2$ respectively. Then in the proof of the main theorem, we show that the dynamics of the system close to $R=R_c$ and $L = L_c$ is governed by the equations
\begin{equation} \label{transequ}
\begin{aligned}
& \frac{dy_1}{dt} = \beta_1 y_1 + y_1 (a_{11} y_1^2 + a_{13} y_2^2) + o(3),\\
& \frac{dy_2}{dt} = \beta_2 y_2 + y_2 (a_{22} y_1^2 + a_{24} y_2^2) + o(3).\\
\end{aligned}
\end{equation}
Here $\beta_i$ is the eigenvalue corresponding to mode $i$, and $o(3)$ denotes
\[
o(3) = o(  | (y_1,y_2)| ^{3} ) +O( |(y_1,y_2)| ^{3} \max_{i=1,2} |\beta  (R)|) .
\]
Let us define
\begin{equation} \label{D1D2D3}
D_1 = a_{22} \beta_1 - a_{11} \beta_2, \quad D_2 = a_{13} \beta_2 - a_{24} \beta_1, \quad  D_3 = a_{11} a_{24} - a_{13}a_{22}.
\end{equation}
To state our main theorems, we assume the following non-degeneracy conditions
\begin{equation}\label{assumptions2}
a_{11} \neq 0, a_{24}\neq0, D_1 \neq 0, D_2 \neq 0, D_3 \neq 0.
\end{equation}

Finally, let us define the following.
\begin{equation} \label{thmsteadystates}
\begin{aligned}
& \varphi_i = (-1)^i \phi_1, 		&& i=1,2, 		&&& \text{(modes with wavenumber $k$)}\\
& \varphi_i = (-1)^i \phi_2, 		&& i=3,4, 		&&& \text{(modes with wavenumber $k+1$)} \\
& \varphi_i = c_i \phi_1+d_i \phi_2,  && i=5,6,7,8, 	&&& \text{(mixed modes)}
\end{aligned}
\end{equation}
where $c_5 =c_6 = -c_7=-c_8$ and $d_5=-d_6 = d_7 =-d_8$.

\begin{figure}
\centering
\subfigure[$D_1<0$, $D_2<0$, $D_3<0$]{
\includegraphics[scale=.30]{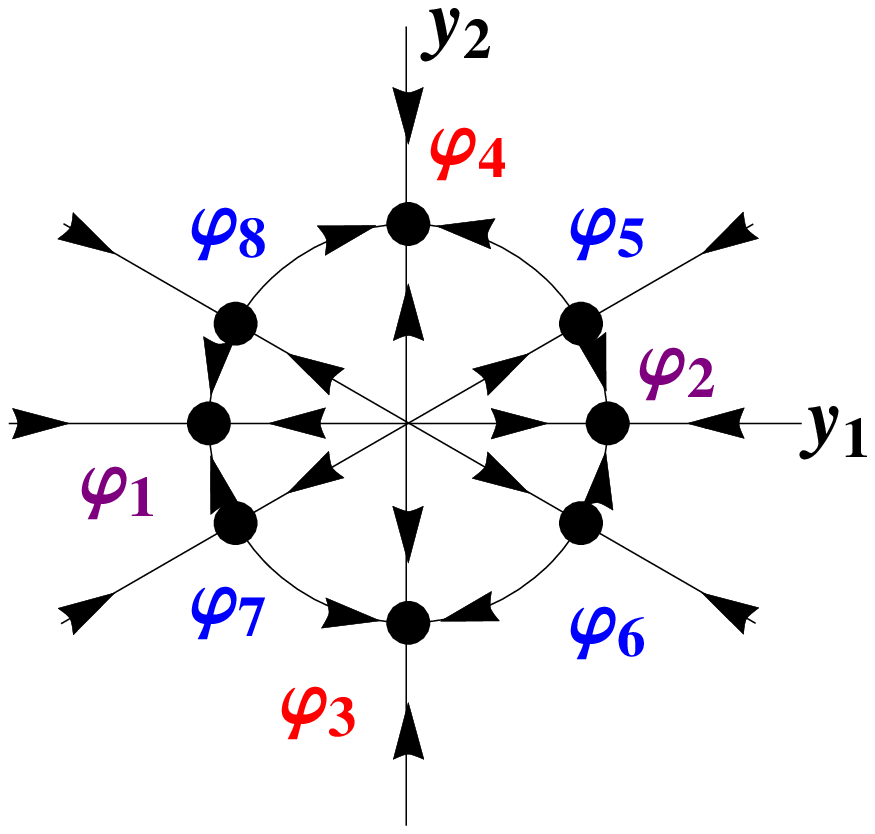}
}
\subfigure[$D_1>0$, $D_2>0$, $D_3>0$]{
\includegraphics[scale=.30]{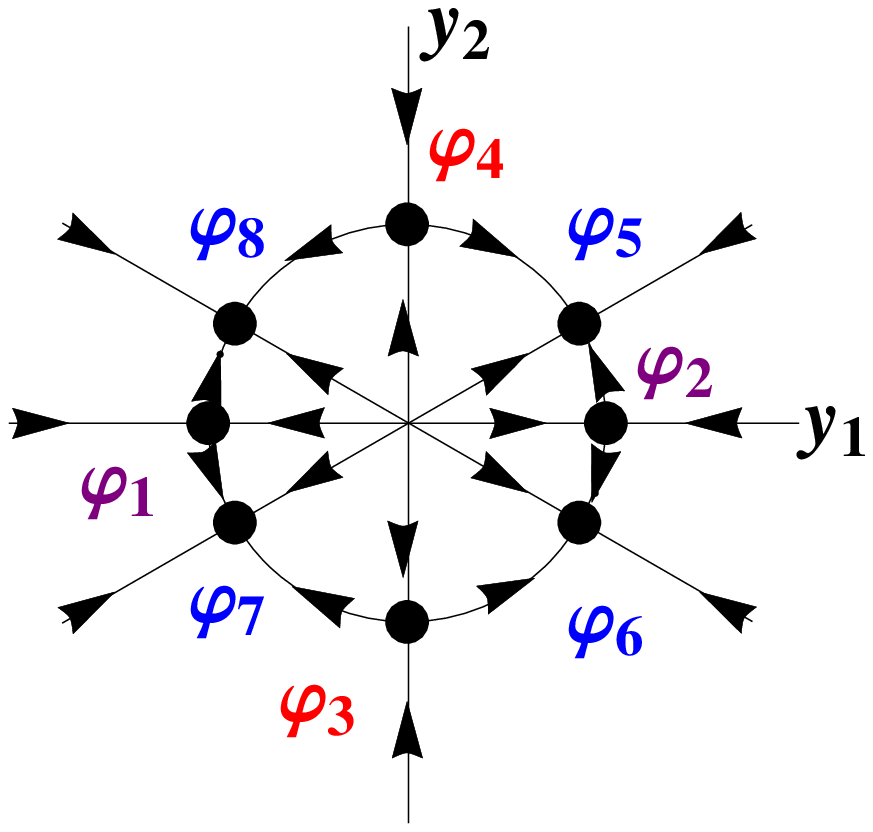}
}
\subfigure[$D_1<0$, $D_2>0$]{
\includegraphics[scale=.36]{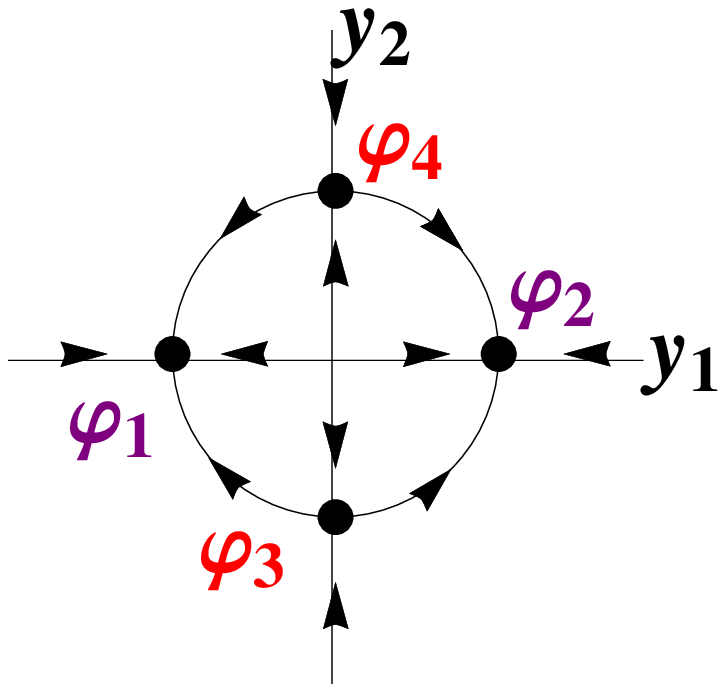}
}
\subfigure[$D_1>0$, $D_2<0$]{
\includegraphics[scale=.36]{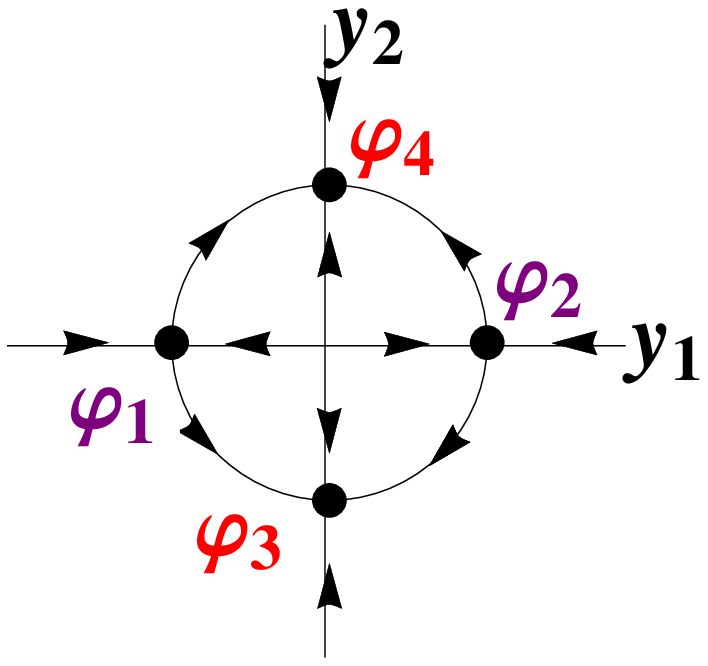}
}
\caption{Transition Scenarios. \label{Transition Scenarios}}
\end{figure}

\begin{theorem}\label{mainthm1}
Under the assumptions \eqref{assumptions} and \eqref{assumptions2}, there is an attractor $\Sigma_R$ bifurcating as $R$ crosses $R_c$ which is homeomorphic to the circle $S^1$ when $L$ is sufficiently close to a critical length scale $L_c$. Moreover, $\Sigma_R$ consists of steady states and their connecting heteroclinic orbits. Let $n(\Sigma_R)$ denote the number of steady states on $\Sigma_R$, $\mathcal{S}$ denote the stable steady states and $\mathcal{U}$ denote the unstable steady states on $\Sigma_R$ upto topological equivalency. Then we have the following characterization of $\Sigma_R$ which is also illustrated in Figure~\ref{Transition Scenarios}.
\begin{itemize}
\item[(i)] If $D_1<0$, $D_2<0$, $D_3<0$, $n(\Sigma_R)=8$, $\mathcal{S} = \{ \varphi_i \mid i=1,2,3,4\}$, $\mathcal{U} = \{ \varphi_i \mid i=5,6,7,8\}$. 
\item[(ii)] If $D_1>0$, $D_2>0$, $D_3>0$, $n(\Sigma_R)=8$, $\mathcal{S} = \{ \varphi_i \mid i=5,6,7,8\}$, $\mathcal{U} = \{ \varphi_i \mid i=1,2,3,4\}$. 
\item[(iii)] If $D_1<0$, $D_2>0$, $n(\Sigma_R)=4$, $\mathcal{S} = \{ \varphi_i \mid i=1,2\}$, $\mathcal{U} = \{ \varphi_i \mid i=3,4\}$. 
\item[(iv)] If $D_1>0$, $D_2<0$, $n(\Sigma_R)=4$, $\mathcal{S} = \{ \varphi_i \mid i=3,4\}$, $\mathcal{U} = \{ \varphi_i \mid i=1,2\}$. \end{itemize}
\end{theorem}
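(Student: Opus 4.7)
The plan is to base the proof on the amplitude equations \eqref{transequ}, which will be derived in a separate section via the standard center-manifold reduction onto the two-dimensional unstable subspace spanned by $\phi_1$ and $\phi_2$. The particular structure of \eqref{transequ}, namely that only the monomials $y_1^3,\ y_1 y_2^2$ appear in the first equation and $y_1^2 y_2,\ y_2^3$ in the second, is forced by the two commuting $\mathbb{Z}_2$ reflection symmetries listed in Table~\ref{tab:parity}: since $\phi_1$ and $\phi_2$ belong to different parity classes, the reduced vector field on the center manifold is equivariant under $(y_1,y_2)\mapsto(-y_1,y_2)$ and $(y_1,y_2)\mapsto(y_1,-y_2)$, which kills every monomial of even total degree and every cubic that is not of the form $y_i\cdot y_j^2$. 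I therefore would first verify this equivariance and then invoke it to obtain the normal form \eqref{transequ} to cubic order.

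Next I would classify the steady states of the cubic truncation. Besides the origin, the ``pure-mode'' equilibria arise from setting exactly one of $y_1,y_2$ to zero, giving the families $y_2=0,\ y_1^2=-\beta_1/a_{11}$ (two points $\pm\phi_1$, identified with $\varphi_1,\varphi_2$) and $y_1=0,\ y_2^2=-\beta_2/a_{24}$ (two points, identified with $\varphi_3,\varphi_4$); these exist for $\beta_1,\beta_2>0$ small because the transition is to an attractor, which forces $a_{11}<0,\ a_{24}<0$. The ``mixed-mode'' equilibria solve the linear system $a_{11}y_1^2+a_{13}y_2^2=-\beta_1$, $a_{22}y_1^2+a_{24}y_2^2=-\beta_2$, whose unique solution by Cramer's rule is $y_1^2=D_2/D_3$, $y_2^2=D_1/D_3$. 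These give four real points $\varphi_5,\ldots,\varphi_8$ precisely when $D_1/D_3>0$ and $D_2/D_3>0$, which is exactly cases (i) and (ii); in cases (iii) and (iv) the mixed modes are not real and the equation has only the four pure-mode solutions. This yields the counts $n(\Sigma_R)$ stated in the theorem.

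The linear stability is then a direct Jacobian calculation at each equilibrium of the truncated system. At $\varphi_{1,2}$ the eigenvalues along $y_1$ and $y_2$ are $-2\beta_1$ and $(a_{11}\beta_2-a_{22}\beta_1)/a_{11}=-D_1/a_{11}$; since $a_{11}<0$, the second eigenvalue is negative iff $D_1<0$. Symmetrically $\varphi_{3,4}$ are stable iff $D_2<0$. At the mixed modes $\varphi_{5,\dots,8}$ the Jacobian has a trace of sign opposite to that of $D_3$ and a determinant proportional to $D_3$ itself (up to the common positive factor $4y_1^2 y_2^2$), which shows they are sinks when $D_3>0$ and saddles when $D_3<0$. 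Comparing with the four cases in the theorem confirms the stated partition into $\mathcal S$ and $\mathcal U$.

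Finally, to pass from the truncated picture to the full statement about $\Sigma_R$, I would appeal to the attractor-bifurcation theorem of Ma--Wang \cite{ma2004dynamic,ma2007rayleigh,ptd}, which gives the existence of a bifurcated attractor whose dimension is between $m-1$ and $m$; together with the fact that under the non-degeneracy assumptions \eqref{assumptions2} the cubic truncation has no equilibrium on the unit circle of the rescaled phase plane that is a center or degenerate saddle, persistence of hyperbolic equilibria and their stable/unstable manifolds implies that the same stability picture, with the same number of steady states and their connecting heteroclinics, survives for the full infinite-dimensional system. Collecting the heteroclinic connections between consecutive sinks and saddles on the level set $\{y_1^2+y_2^2 = O(\beta)\}$ produces a topological circle, establishing $\Sigma_R\cong S^1$. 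The principal technical obstacle I anticipate is this last step, namely controlling the higher-order remainder $o(3)$ in \eqref{transequ} well enough to guarantee that each heteroclinic orbit of the cubic truncation perturbs to a genuine heteroclinic of the full reduced flow and that no spurious equilibria appear inside the bifurcated attractor; this will rely on the hyperbolicity provided by \eqref{assumptions2} together with standard invariant-manifold perturbation arguments.
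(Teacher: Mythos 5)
Your reduction to the amplitude equations, the enumeration of the pure and mixed equilibria, and the Jacobian/trace--determinant analysis all match the paper's Steps 1, 2 and 4 (your symmetry-equivariance route to the form \eqref{transequ} is a reasonable substitute for the paper's explicit parity computation of the center manifold functions). The genuine gap is in what you treat as given: you assert that $a_{11}<0$ and $a_{24}<0$ ``because the transition is to an attractor,'' and you never address the cross term $a_{13}+a_{22}$ at all. This is circular and omits the analytic heart of the proof. The paper proves $a_{11}=(\mathcal{L}_R\Phi_1,\Phi_1)<0$ and $a_{24}=(\mathcal{L}_R\Phi_3,\Phi_3)<0$ from the center manifold identities \eqref{2dcmequ2} and the properties of the Navier--Stokes trilinear form, and then, in Lemma~\ref{S1lemma}, establishes the negative definiteness of the cubic part, $(h(y),y)\le C|y|^4$ with $C<0$, by writing $a_{13}+a_{22}=2(\mathcal{L}_R\Phi_1,\Phi_3)+(\mathcal{L}_R\Phi_2,\Phi_2)$ and bounding $(\mathcal{L}_R\Phi_1,\Phi_3)\le\sqrt{a_{11}a_{24}}$ via Cauchy--Schwarz in the eigenbasis. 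This definiteness estimate is exactly the hypothesis of the Ma--Wang $S^1$ attractor-bifurcation theorem (Theorem~\ref{S1biftheorem}), which is what delivers the statements you cannot reach by your route: that $\Sigma_R$ is homeomorphic to $S^1$, that it consists only of steady states and their connecting heteroclinic orbits (no periodic orbit, no spurious equilibria on the attractor), and the bound on the number of singular points.

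Your substitute for this step --- Ma--Wang's general attractor existence plus ``persistence of hyperbolic equilibria and standard invariant-manifold perturbation'' --- does not close the argument. Persistence gives you hyperbolic equilibria of the full reduced flow near those of the truncation, but without a sign-definite cubic term you cannot control the flow on the annulus where the attractor lives, cannot exclude additional invariant sets (e.g.\ a periodic orbit in the case where both pure modes are unstable), and hence cannot conclude the circle structure or the completeness of the list \eqref{thmsteadystates}; indeed, nothing in \eqref{assumptions2} alone prevents $a_{13}+a_{22}$ from being so large and positive that the quartic form changes sign. A minor further slip: at the mixed modes the trace equals $2(a_{11}y_1^2+a_{24}y_2^2)$, which is negative whenever the mixed modes exist (given $a_{11},a_{24}<0$), not ``of sign opposite to $D_3$''; your stability conclusion survives because the determinant $4y_1^2y_2^2D_3$ alone decides the saddle case, but the statement as written is incorrect in case (i). To repair the proposal you need precisely the paper's Lemma~\ref{S1lemma} (or an equivalent definiteness argument) before invoking any $S^1$-structure theorem.
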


According to Theorem~\ref{mainthm1}, the structure of the attractor depends on $D_1$, $D_2$ and $D_3$ which in turn depends on the coefficients of the reduced equations. By \eqref{D1D2D3}, $D_3$ has a definite sign whereas $D_1$ and $D_2$ vanish at the criticality $\beta_1 =\beta_2 =0$. In the proof of Theorem~\ref{mainthm1}, we analytically prove that the coefficients $a_{11}$, and $a_{24}$ are negative. Our numerical computations indicate that $a_{13}$ is also always negative. We observed that  $a_{22}$ and $D_3$ can be both positive and negative. 

That gives three possible cases depending on the signs of $a_{22}$ and $D_3$. In Figure~\ref{Transition Scenarios2}, we classify these cases in a small neighborhood of $\beta_1 = \beta_2 =0$ in the $\beta_1$--$\beta_2$ plane according to our main theorem  and the following observations.
\begin{itemize}
\item If $a_{22}>0$ then $D_3>0$, $D_1>0$ but  $D_2$ changes sign in the first quadrant.
\item If $a_{22}<0$ and $D_3>0$ then $D_1$ and $D_2$ changes sign in the first quadrant. Moreover the case where both $D_1<0$ and $D_2<0$ is not possible.
\item If $a_{22}<0$ and $D_3<0$ then again $D_1$ and $D_2$ changes sign in the first quadrant. This time the case where both $D_1>0$ and $D_2>0$ is not possible.
\end{itemize}
\begin{figure}
\centering
\subfigure[$a_{22}>0$]{
\includegraphics[scale=.6]{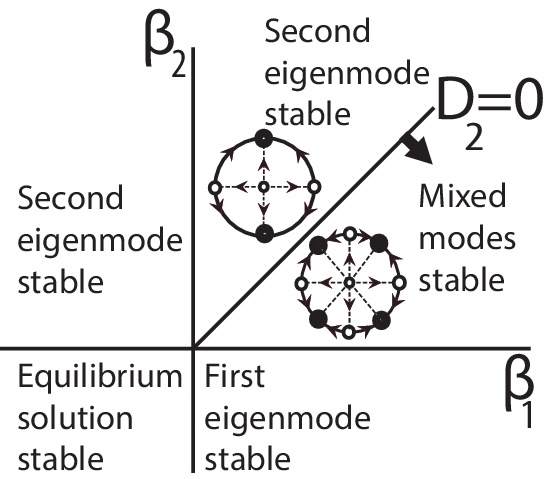}
}
\subfigure[$a_{22}<0, D_3>0$]{
\includegraphics[scale=.6]{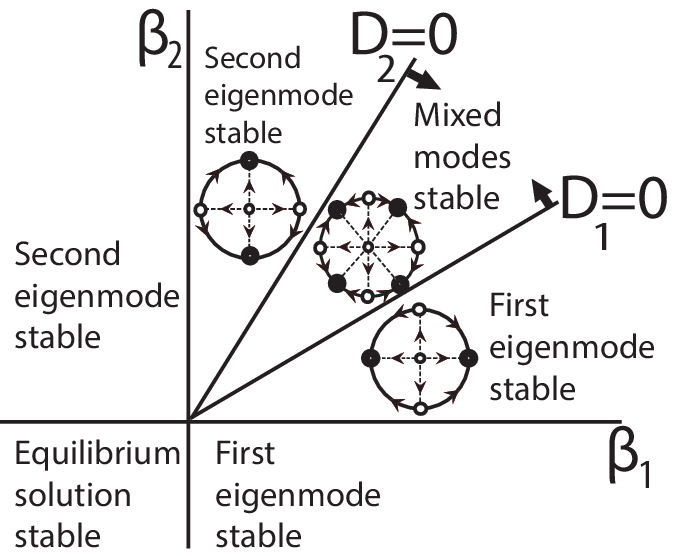}
}
\subfigure[$a_{22}<0,  D_3<0$]{
\includegraphics[scale=.6]{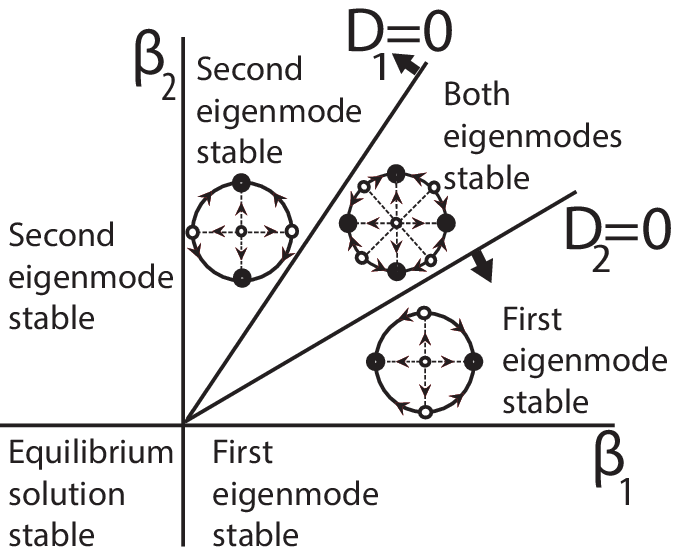}
}
\caption{The transition scenarios in $\beta_1$--$\beta_2$ plane. In the above cases we assume that $a_{13}<0$ which is due to our numerical observations. The arrows on the lines $D_1 =0$, $D_2=0$ indicate in which directions $D_1$ and $D_2$ increase. First and second eigenmodes correspond to the eigenmodes with wavenumber $k$ and $k+1$ respectively. \label{Transition Scenarios2}}
\end{figure}

\section{Proof of the Main Theorem}
We will give the proof in several steps.\\
{\bf STEP 1. The reduced equations.}
When there are two critical modes $\phi_1$, $\phi_2$, the center manifold is a two dimensional manifold embedded in the infinite dimensional space. We denote the center manifold function by:
\[
\Phi = y_1^2 \Phi_1 + y_1 y_2 \Phi_2+ y_2^2 \Phi_3 +o(y^2), \qquad \Phi_i = \left(\Psi_i,  \Theta_i\right)^T.
\]
To study the dynamics on the center manifold, we plug in
\begin{equation} \label{rb2dred2deq1}
\phi = y_1 \phi_1 + y_2 \phi_2 + y_1^2\Phi_1 + y_1 y_2 \Phi_2 +y_2^2 \Phi_3 +o(2),
\end{equation}
into \eqref{functional}, take the inner product with $\phi_1$, $\phi_2$ and use the orthogonality of the eigenvectors, thanks to the self-adjointness of the linear operator. The reduced equations read
\begin{equation} \label{rb2dred2deq3}
\frac{dy_i}{dt} = \beta_i(R) y_i + \frac{1}{(\phi_i,\phi_i)}(G(\phi),\phi_i),  \quad i=1,2
\end{equation}
We normalize the first two eigenfunctions so that
\[
(\phi_1,\phi_1) = (\phi_2,\phi_2) =1.
\]
Now if we expand the nonlinear terms in \eqref{rb2dred2deq3}, we get
\begin{equation} \label{rb2dred2deq4}
\begin{aligned}
& \frac{dy_1}{dt} = \beta_1 y_1 +(a_{11} y_1^3 + a_{12} y_1^2 y_2 + a_{13} y_1 y_2^2 + a_{14} y_2^3) + o(3),\\
& \frac{dy_2}{dt} = \beta_2 y_2 +(a_{21} y_1^3 + a_{22} y_1^2 y_2 + a_{23} y_1 y_2^2 + a_{24} y_2^3) + o(3),\\
\end{aligned}
\end{equation}
where
\begin{equation} \label{a_ij}
\begin{aligned}
& a_{k1} = G_s(\phi_1,\Phi_1,\phi_k), 
&& a_{k2} = G_s(\phi_1,\Phi_2,\phi_k) +G_s(\phi_2,\Phi_1,\phi_k),\\
& a_{k4} = G_s(\phi_2,\Phi_3,\phi_k),
&& a_{k3} = G_s(\phi_1,\Phi_3,\phi_k) +G_s(\phi_2,\Phi_2,\phi_k).\\
\end{aligned}
\end{equation}
{\bf STEP 2. Parities of the center manifold functions.}
To compute the center manifold approximation, we will use the the following formula which was introduced in  Ma--Wang \cite{bifbook}.
\begin{equation} \label{2dcmequ}
\begin{aligned}
& -\mathcal{L}_R \Phi_1 = P_2 G( \phi_1,\phi_1), \\
& -\mathcal{L}_R \Phi_2 = P_2[G( \phi_1,\phi_2)+G(\phi_2,\phi_1)], \\
& -\mathcal{L}_R \Phi_3 =  P_2G(\phi_2,\phi_2).
\end{aligned}
\end{equation}
Here
\[
\begin{aligned}
& P_2 : H \rightarrow E_2, && \mathcal{L}_R = L_R |_{E_2} : E_2 \rightarrow \bar{E_2}, \\
& E_1 = \text{span}\{\phi_1,\phi_2\},
&& E_2 = E_1^{\perp}.
\end{aligned}
\]

Let $X = \{ f\in C(\Omega) \mid f(-x,z) = \pm f(x,z) \text{ and } f(x,-z) = \pm f(x,z) \}$ and let  $s:X\rightarrow \{\pm 1\}^2$ denote the parity function:
\[
s(f) = (s_x(f),s_z(f)),
\]
where
\[
s_x(f) = \pm 1 \quad \text{if } f(-x,z)=\pm f(x,z), \quad s_z(f) = \pm 1 \quad \text{if } f(x,-z)=\pm f(x,z).
\]

Let us define for $\phi_i = (u_i,w_i, \theta_i)$, $i=1,2$ the following.
\begin{equation} \label{g1-g3}
G(\phi_i,\phi_j) = \mat{g_1(\phi_i,\phi_j) \\ g_2(\phi_i,\phi_j) \\ g_3(\phi_i,\phi_j)} = 
\mat{ 	-u_i \frac{\partial u_j}{\partial x} - w_i\frac{\partial u_j}{\partial z} \\
		-u_i \frac{\partial w_j}{\partial x} - w_i\frac{\partial w_j}{\partial z}\\
		-u_i \frac{\partial \theta_j}{\partial x} - w_i\frac{\partial \theta_j}{\partial z}
	}.
\end{equation}

The following lemma can be proved using the basic properties of parities.
\begin{lemma}\label{paritylemma1}
If $\phi_i = (u_i,v_i,\theta_i) \in X^3\cap H_1$, $i=1,2$ then for $i,j,k =1,2$, 
\begin{itemize}
\item[1)] $-s(g_1(\phi_i,\phi_j))$ $= s(g_2(\phi_i,\phi_j))$ $= s(g_3(\phi_i,\phi_j))$ $= (s_x(w_i w_j), -s_z(w_i w_j))$.
\item[2)] $s(g_k(\phi_i,\phi_j)) = s(g_k(\phi_j,\phi_i))$.
\end{itemize}
\end{lemma}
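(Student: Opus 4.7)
The plan is to reduce the proof to a small piece of parity bookkeeping, based on two elementary rules that together determine the parity of every composite expression appearing in $g_1, g_2, g_3$. The first rule is that partial differentiation flips the parity in the differentiated variable: for $f \in X$, one has $s_x(\partial_x f) = -s_x(f)$, $s_z(\partial_x f) = s_z(f)$, together with the symmetric statements for $\partial_z$. The second rule exploits $\phi_i \in H_1$: the divergence-free condition $\partial_x u_i = -\partial_z w_i$ combined with the first rule forces
\[
s(u_i) = \bigl(-s_x(w_i),\,-s_z(w_i)\bigr).
\]
Writing $s(w_i) = (a_i, b_i)$, this says $s(u_i) = (-a_i, -b_i)$, so the two velocity components always carry opposite parities in both variables.

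With these rules in hand, the plan is to prove part (1) for $g_1$ and $g_2$ by term-by-term multiplication. In $g_1 = -u_i \partial_x u_j - w_i \partial_z u_j$, both summands acquire parity $(-a_i a_j,\, b_i b_j)$; in $g_2 = -u_i \partial_x w_j - w_i \partial_z w_j$, both summands acquire parity $(a_i a_j,\, -b_i b_j)$. Since $s_x(w_i w_j) = a_i a_j$ and $s_z(w_i w_j) = b_i b_j$, this immediately yields the desired identity
\[
-s(g_1) = s(g_2) = \bigl(s_x(w_i w_j),\,-s_z(w_i w_j)\bigr).
\]

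For $g_3 = -u_i \partial_x \theta_j - w_i \partial_z \theta_j$, the identical computation produces $s(g_3) = (a_i p_j,\, -b_i q_j)$, where $s(\theta_j) = (p_j, q_j)$, so the claim holds if and only if $s(\theta_j) = s(w_j)$. This compatibility is not a consequence of $\phi_j \in X^3 \cap H_1$ alone, but is the one extra input needed; it holds automatically in every parity class of Table~\ref{tab:parity}, as one verifies by computing $s(w)$ from $w = -\psi_x$ in each row and comparing with the listed $s(\theta)$. Under this compatibility, part (1) is complete.

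Part (2) is then immediate: every parity vector produced above is a symmetric function of $i$ and $j$ through the products $a_i a_j$ and $b_i b_j$, hence invariant under the swap $\phi_i \leftrightarrow \phi_j$. The argument is purely mechanical parity arithmetic, and the only step requiring genuine structural input is the divergence-free consequence $s(u) = -s(w)$; there is no real obstacle beyond careful sign tracking.
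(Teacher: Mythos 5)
Your proof is correct and follows exactly the elementary parity bookkeeping the paper intends — the paper gives no written argument, saying only that the lemma follows from basic properties of parities, and the divergence-free consequence $s(u_i)=-s(w_i)$ is indeed the one structural input. Your flag on the $g_3$ case is apt: the compatibility $s(\theta_j)=s(w_j)$ is not forced by $X^3\cap H_1$ alone, but it does hold for the critical eigenmodes, the only functions to which the lemma is applied, since the linear equation $\Delta\theta_j+\sqrt{R}\sqrt{\Pr}\,w_j=\beta\theta_j$ (equivalently, membership in one of the four classes of Table~\ref{tab:parity}) forces $w_j$ and $\theta_j$ to share the same parity.
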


Hereafter without loss of generality we will assume
\begin{equation} \label{parity assumption}
\text{ $\phi_1$ is of parity class 1 and $\phi_2$ is of parity class 2,}
\end{equation}
which are as given in Table~\ref{tab:parity}.

Using the Lemma~\ref{paritylemma1}, we can prove
\begin{lemma}\label{paritylemma2}
Under the assumption \eqref{parity assumption}, 
\[
s(g_2(\phi_1,\phi_1)) =  s(g_2(\phi_2,\phi_2)) =  (1,-1), \qquad s(g_2(\phi_1,\phi_2)) = (-1,-1).
\]
\end{lemma}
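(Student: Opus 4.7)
The plan is to reduce the computation entirely to Lemma~\ref{paritylemma1}, part (1), which gives $s(g_2(\phi_i,\phi_j))=(s_x(w_iw_j),-s_z(w_iw_j))$. So I only need to determine $s(w_1)$ and $s(w_2)$ from the parity classes of $\phi_1$ and $\phi_2$, and then multiply.

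First I would translate the parity class of each $\phi_i$ into the parity of its vertical velocity component via the streamfunction identity $w=-\psi_x$. Since differentiation in $x$ flips $s_x$ and preserves $s_z$, if $\psi$ has parity $(a,b)$ in $(x,z)$, then $w$ has parity $(-a,b)$. Under the assumption \eqref{parity assumption}, the streamfunction of $\phi_1$ is $(e,e)$ so $s(w_1)=(-1,1)$, and the streamfunction of $\phi_2$ is $(o,e)$ so $s(w_2)=(1,1)$.

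Next I would compute the parity of each product $w_iw_j$ componentwise. For the diagonal cases, $s(w_1w_1)=((-1)^2,1^2)=(1,1)$ and $s(w_2w_2)=(1^2,1^2)=(1,1)$; for the cross term, $s(w_1w_2)=((-1)(1),(1)(1))=(-1,1)$. Plugging each of these into the formula $s(g_2(\phi_i,\phi_j))=(s_x(w_iw_j),-s_z(w_iw_j))$ from Lemma~\ref{paritylemma1} yields precisely $s(g_2(\phi_1,\phi_1))=s(g_2(\phi_2,\phi_2))=(1,-1)$ and $s(g_2(\phi_1,\phi_2))=(-1,-1)$, as claimed.

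There is no real obstacle here; the proof is just careful bookkeeping, and the symmetry statement $s(g_k(\phi_i,\phi_j))=s(g_k(\phi_j,\phi_i))$ from Lemma~\ref{paritylemma1}(2) guarantees that I do not need to worry about the ordering of the arguments. The only mild caveat is the sign flip in going from $\psi$ to $w=-\psi_x$, but an overall sign does not affect parity, so it may be ignored.
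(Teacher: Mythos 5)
Your proposal is correct and follows exactly the route the paper intends: the paper gives no separate argument beyond invoking Lemma~\ref{paritylemma1}(1), and your bookkeeping (computing $s(w_1)=(-1,1)$, $s(w_2)=(1,1)$ from $w=-\psi_x$ and the parity classes, then multiplying and applying $s(g_2(\phi_i,\phi_j))=(s_x(w_iw_j),-s_z(w_iw_j))$) is precisely that computation carried out correctly.
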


\begin{lemma}\label{P2Glemma} 
Under the assumption \eqref{parity assumption},
$P_2G(\phi_i,\phi_j) = G(\phi_i,\phi_j)$ for $i,j=1,2$.
\end{lemma}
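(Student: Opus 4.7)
The plan is to reduce the claim $P_2 G(\phi_i,\phi_j) = G(\phi_i,\phi_j)$ to an orthogonality statement and then verify it by a parity count. Since $E_2 = E_1^{\perp}$ in the $L^2$ inner product, it suffices to show that $G(\phi_i,\phi_j)$ is orthogonal to both $\phi_1$ and $\phi_2$, i.e.\ $(G(\phi_i,\phi_j),\phi_k)=0$ for all $i,j,k\in\{1,2\}$. Strictly speaking $G$ in \eqref{operators} carries a Leray projection on its velocity part, but since each $\phi_k$ is divergence free and vanishes on $\partial\Omega$, the Leray projection is invisible in the inner product (the difference is a gradient that integrates against $\phi_k$ to zero). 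So I may use the raw components $(g_1,g_2,g_3)$ of \eqref{g1-g3} and compute
\[
(G(\phi_i,\phi_j),\phi_k) = \int_{\Omega}\bigl[g_1(\phi_i,\phi_j)\,u_k + g_2(\phi_i,\phi_j)\,w_k + g_3(\phi_i,\phi_j)\,\theta_k\bigr]\,dx\,dz,
\]
and show each integrand is odd about the midline $x=L/2$ or the midline $z=1/2$, so the integral vanishes.

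First I would tabulate the parities of the components of $\phi_1,\phi_2$. From the assumption \eqref{parity assumption} and the streamfunction relations $u=\psi_z,\ w=-\psi_x$, one reads off $s(u_1)=(e,o)$, $s(w_1)=(o,e)$, $s(\theta_1)=(o,e)$ for parity class 1, and $s(u_2)=(o,o)$, $s(w_2)=(e,e)$, $s(\theta_2)=(e,e)$ for parity class 2. Next, from Lemma~\ref{paritylemma2} I know $s(g_2(\phi_1,\phi_1))=s(g_2(\phi_2,\phi_2))=(1,-1)$ and $s(g_2(\phi_1,\phi_2))=(-1,-1)$; Lemma~\ref{paritylemma1}(1) then forces $s(g_1)=-s(g_2)$ and $s(g_3)=s(g_2)$ in every case, and Lemma~\ref{paritylemma1}(2) covers $(i,j)=(2,1)$.

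With these parities fixed, there are eight triples $(i,j,k)$ to check (four if one uses the symmetry in $i,j$). In each one the three summands $g_1 u_k$, $g_2 w_k$, $g_3\theta_k$ turn out to have a common parity that is odd in at least one coordinate direction about the midline. For instance, for $(i,j,k)=(1,1,1)$ all three summands have parity $(o,o)$; for $(1,1,2)$ they are all $(e,o)$; for $(1,2,1)$ all $(e,o)$; for $(1,2,2)$ all $(o,o)$; for $(2,2,1)$ all $(o,o)$; for $(2,2,2)$ all $(e,o)$. In every case the integral over $\Omega$ vanishes by symmetry about the appropriate midline, and hence $(G(\phi_i,\phi_j),\phi_k)=0$.

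This establishes $G(\phi_i,\phi_j)\in E_2$ and therefore $P_2 G(\phi_i,\phi_j)=G(\phi_i,\phi_j)$, completing the proof. The computation is purely bookkeeping; the only subtle point, and the reason to mention it at the start, is the remark about the Leray projection being harmless under pairing with divergence-free vectors vanishing on $\partial\Omega$. Beyond that, the lemma is a direct corollary of the parity assignments supplied by Lemmas~\ref{paritylemma1} and \ref{paritylemma2}.
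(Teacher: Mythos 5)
Your proof is correct and follows essentially the same route as the paper: reduce $P_2G(\phi_i,\phi_j)=G(\phi_i,\phi_j)$ to the orthogonality $(G(\phi_i,\phi_j),\phi_k)=0$ for $i,j,k=1,2$ and kill each integrand by a parity count based on Lemmas~\ref{paritylemma1} and \ref{paritylemma2}. The paper merely streamlines the bookkeeping by observing that in every case $g_1$ is even while $g_2,g_3$ are odd in the $z$-direction, with $u_k$ odd and $w_k,\theta_k$ even in $z$, so oddness in $z$ alone suffices; your additional remark that the Leray projection is invisible when paired with the divergence-free $\phi_k$ is a correct detail the paper leaves implicit.
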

\begin{proof}
Note that $P_2G(\phi_i,\phi_j) = G(\phi_i,\phi_j)$ if $(G(\phi_i,\phi_j),\phi_k) = 0$ for $i,j,k = 1,2$.
Now
\begin{equation} \label{P2Glemmaequ}
(G(\phi_i,\phi_j),\phi_k) = \int_{\Omega} \left(g_1(\phi_i,\phi_j) u_k + g_2(\phi_i,\phi_j) w_k + g_3(\phi_i,\phi_j) \theta_k \right) dx dz
\end{equation}

By Lemma~\ref{paritylemma1} and Lemma~\ref{paritylemma2}, $g_1(\phi_i,\phi_j$) is even in the $z$-direction while $g_2(\phi_i,\phi_j)$ and $g_3(\phi_i,\phi_j)$ are odd in the $z$-direction.  Since $u_k$ is odd and $w_k$ and $\theta_k$ are even in the $z$-direction, the integral in \eqref{P2Glemmaequ} must vanish over $\Omega$. 
\end{proof}

Thus by the Lemma~\ref{P2Glemma} and the equation \eqref{2dcmequ},  $\Phi_i=(U_i,W_i,\Theta_i)$, $(i=1,2,3)$ are solutions of
\begin{equation} \label{2dcmequ2}
\begin{aligned}
& -\mathcal{L}_R \Phi_1 = G( \phi_1,\phi_1), \\
& -\mathcal{L}_R \Phi_2 = G( \phi_1,\phi_2)+G(\phi_2,\phi_1), \\
& -\mathcal{L}_R \Phi_3 = G(\phi_2,\phi_2).
\end{aligned}
\end{equation}
Using the streamfunction $\Psi_z = U$, $\Psi_x = -W$, one can eliminate the pressure from these equations to obtain
\begin{equation}\label{2dcmequ4}
\begin{aligned} 
& \Pr \Delta^2\Psi-\sqrt{R} \sqrt{\Pr} \frac{\partial \Theta}{\partial x} = h_1 := -\frac{\partial g_1}{\partial z}+\frac{\partial g_2}{\partial x},\\
-& \sqrt{R} \sqrt{\Pr} \frac{\partial \Psi}{\partial x}+\Delta\theta = h_2 := - g_3,\\
& \Psi=\frac{\partial \Psi}{\partial n}=\Theta=0 \text{ on }\partial \Omega.
\end{aligned}
\end{equation}

\begin{lemma}
Under the assumption \eqref{parity assumption},  the center manifold functions have the parity as given in Table~\ref{cmparity}.
\end{lemma}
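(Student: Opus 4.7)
The plan is to exploit the fact that both the linear operator in \eqref{2dcmequ4} and the reflection symmetries $x\mapsto -x$, $z\mapsto -z$ commute, so that \eqref{2dcmequ4} decouples into four subproblems indexed by the parity classes of Table~\ref{tab:parity}. Once the parity of each right-hand side is known, the parity of the solution is forced, because the restriction of $\mathcal{L}_R$ to a single parity class is invertible (the critical subspace $E_1$ lives entirely in classes 1 and 2 by the assumption \eqref{parity assumption}, so for class 3 and class 4 data we are inverting a bounded-below operator on $E_2$).

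The first step is routine: starting from $\psi_1\in(e,e)$, $\theta_1\in(o,e)$ (class 1) and $\psi_2\in(o,e)$, $\theta_2\in(e,e)$ (class 2), differentiate to obtain $(u_1,w_1)=(\psi_{1,z},-\psi_{1,x})$ of parity $(e,o),(o,e)$ and $(u_2,w_2)$ of parity $(o,o),(e,e)$. Plug these into Lemma~\ref{paritylemma1} and Lemma~\ref{paritylemma2} to read off, for each of the three pairings $(\phi_1,\phi_1)$, $(\phi_1,\phi_2)+(\phi_2,\phi_1)$, $(\phi_2,\phi_2)$, the parities of $g_1, g_2, g_3$. For example for the diagonal pairings, Lemma~\ref{paritylemma2} gives $s(g_2)=(1,-1)$, so by Lemma~\ref{paritylemma1}(1), $s(g_1)=(-1,1)$ and $s(g_3)=(1,-1)$; for the cross term, $s(g_2)=(-1,-1)$, hence $s(g_1)=(1,1)$ and $s(g_3)=(-1,-1)$.

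The second step is to propagate these parities through the right-hand sides $h_1=-\partial_z g_1+\partial_x g_2$ and $h_2=-g_3$ of \eqref{2dcmequ4}, remembering that $\partial_x$ flips $s_x$ and $\partial_z$ flips $s_z$. A short bookkeeping check then gives, for $\Phi_1$ and $\Phi_3$, $s(h_1)=(-1,-1)$ and $s(h_2)=(1,-1)$, while for $\Phi_2$, $s(h_1)=(1,-1)$ and $s(h_2)=(-1,-1)$. Finally, because $\Delta^2$, $\Delta$ preserve parity while $\partial_x$ flips $s_x$, the first equation of \eqref{2dcmequ4} requires $s(\Psi)=s(h_1)$ and $s(\Theta)=(-s_x(h_1),s_z(h_1))$, and these choices are automatically consistent with the second equation and with the boundary conditions (which are invariant under reflections). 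This identifies $\Psi_1,\Psi_3$ as $(o,o)$ with $\Theta_1,\Theta_3$ as $(e,o)$ (parity class 4), and $\Psi_2$ as $(e,o)$ with $\Theta_2$ as $(o,o)$ (parity class 3), yielding the entries of Table~\ref{cmparity}.

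The only genuinely delicate point will be justifying that \eqref{2dcmequ4} has a solution of the claimed parity rather than merely that the parity is consistent with the equation; this is handled by noting that $L_R$ preserves each parity subspace (which follows from the reflection symmetry of the Boussinesq system) and that the critical kernel sits in classes 1 and 2, so $\mathcal{L}_R=L_R|_{E_2}$ is invertible on the class-3 and class-4 subspaces, making the solution unique and of the parity dictated by the source. Everything else is symbol-pushing with the parity tables.
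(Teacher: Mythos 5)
Your proposal is correct and follows essentially the same route as the paper: both determine the parities of $g_1,g_2,g_3$ from Lemma~\ref{paritylemma1} and Lemma~\ref{paritylemma2} and then propagate them through the linear system \eqref{2dcmequ4} (the paper merely eliminates $\Theta$ first to get a single sixth-order equation for $\Psi$, rather than arguing on the coupled system). Your additional observation that $\mathcal{L}_R$ preserves the parity subspaces and is invertible on the class-3/class-4 part of $E_2$, so the unique solution must inherit the parity of the source, correctly justifies a point the paper leaves implicit.
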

\begin{proof}
We can eliminate $\Theta$ from the first equation of \eqref{2dcmequ4} to obtain
\begin{equation} \label{}
\begin{aligned}
& \Pr \Delta^3\Psi - R \Pr  \frac{\partial^2 \Psi}{\partial x^2}=  \Delta h_1+ \sqrt{R} \sqrt{\Pr}\frac{\partial h_2}{\partial x},\\
& \Delta \Theta = h_2 +\sqrt{R} \sqrt{\Pr}\frac{\partial \Psi}{\partial x}.
\end{aligned}
\end{equation}
Now using Lemma~\ref{paritylemma1} and Lemma~\ref{paritylemma2}, we see that    $s(\Psi) = (-s_x(g_2),s_z(g_2))$ and $s(\Theta)  = s(g_2)$.
\end{proof}

\begin{table}
\begin{center}
\begin{tabular}{|c|c|c|c|c|c|c|c|c|c|}
\hline
\multicolumn{2}{|c|}{$\phi_1$} & \multicolumn{2}{|c|}{$\phi_2$} & \multicolumn{2}{|c|}{$\Phi_1$}  & \multicolumn{2}{|c|}{$\Phi_2$}  &\multicolumn{2}{|c|}{$\Phi_3$} \\\hline
$\psi_1$ & $\theta_1$ & $\psi_2$ & $\theta_2$ & $\Psi_1$ & $\Theta_1$ & $\Psi_2$ & $\Theta_2$ & $\Psi_3$ & $\Theta_3$ \\ \hline
(e,e)&(o,e)&(o,e)&(e,e)&(o,o)&(e,o)&(e,o)&(o,o)&(o,o)&(e,o)\\\hline
\end{tabular}
\end{center}
\caption{\label{cmparity} Parities of the first two critical modes and the center manifold functions.}
\end{table}

Using Table~\ref{cmparity} we find that the integrands in $a_{12}$, $a_{14}$, $a_{21}$, $a_{23}$ are all odd functions of z and hence we have the following.
\begin{lemma}\label{somecozero}
Under the assumption \eqref{parity assumption}, in \eqref{a_ij} we have 
\[
a_{12} = a_{14} = a_{21} = a_{23} = 0.
\]
\end{lemma}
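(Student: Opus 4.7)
The plan is to argue by parity, in direct analogy with the proof of Lemma~\ref{P2Glemma}: extend the parity bookkeeping from the critical modes to the center manifold functions $\Phi_1,\Phi_2,\Phi_3$ via Table~\ref{cmparity}, and then observe that every integrand entering the four coefficients $a_{12},a_{14},a_{21},a_{23}$ is odd along at least one coordinate direction, so the integral over the (recentered and now symmetric) rectangle $\Omega$ vanishes by oddness.

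Concretely, I would first expand each suspect coefficient using \eqref{a_ij}, the symmetrization in \eqref{bilinear}, and the componentwise form \eqref{g1-g3} of $G$. For example,
\[
a_{14} = G_s(\phi_2,\Phi_3,\phi_1) = G(\phi_2,\Phi_3,\phi_1) + G(\Phi_3,\phi_2,\phi_1),
\]
and each summand decomposes into six scalar integrals of the schematic form $\int_\Omega f\,(\partial_\ast g)\,h\,dx\,dz$, where $f,g,h$ range over the velocity or temperature components of the two modes involved. Analogous expansions hold for $a_{12}$, $a_{21}$, and $a_{23}$.

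Second, the $(x,z)$-parity of each integrand is determined by the chain-rule identity underlying Lemma~\ref{paritylemma1}, using the velocity relations $u=\partial_z\psi$ and $w=-\partial_x\psi$ together with the streamfunction and temperature parities recorded in Tables~\ref{tab:parity} and \ref{cmparity}; differentiation flips the corresponding parity. A direct check then shows that for $a_{12}$, $a_{14}$, $a_{21}$, $a_{23}$ every scalar integrand is odd in at least one of the variables $x$ or $z$, so each integral vanishes on the symmetric domain and the four coefficients are identically zero. This is in contrast to the nonvanishing coefficients $a_{11}$, $a_{13}$, $a_{22}$, $a_{24}$, whose integrands turn out to be even in both variables, consistent with Lemma~\ref{paritylemma2} applied to the quadratic nonlinearity.

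The main obstacle is purely combinatorial: one has to track the parities of many scalar integrands without arithmetic error. Nothing beyond Table~\ref{cmparity} and the elementary parity algebra of derivatives and products is required, and the reasoning is simply the trilinear analogue of the bilinear parity proof already given for Lemma~\ref{P2Glemma}.
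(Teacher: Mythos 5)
Your proposal is correct and follows essentially the same route as the paper: derive the parities of $\Phi_1,\Phi_2,\Phi_3$ (Table~\ref{cmparity}) from Lemmas~\ref{paritylemma1} and \ref{paritylemma2}, then note that every scalar integrand occurring in $a_{12},a_{14},a_{21},a_{23}$ has odd parity over the symmetrized domain, so all four coefficients vanish. Your formulation ``odd in at least one of $x$ or $z$'' is if anything the safer statement (with $\phi_1$ of class 1 and $\phi_2$ of class 2 the integrands come out odd in $x$, whereas the paper's one-line proof asserts oddness in $z$), but the parity-cancellation mechanism is identical.
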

As a result of Lemma~\ref{somecozero}, we obtain the reduced equations \eqref{transequ}.

{\bf STEP 3. The attractor bifurcation.}
Now, we will prove that the bifurcated attractor is homeomorphic to $S^1$. For this we will need the following result.
\begin{theorem}[Ma--Wang \cite{bifbook}]\label{S1biftheorem}
Let $v$ be a two dimensional $C^r\, (r\geq1)$ vector field given by
\begin{equation} \label{finS1}
v_{\lambda} (x) = \beta(\lambda) \, x - h(x,\lambda),
\end{equation}
for $x\in \mathbb{R}^2$. Here $\beta(\lambda)$ is a continuous function of $\lambda$ satisfying $\beta(\lambda)\overset{>}{\underset{<}{=}}0$ for $ \lambda \overset{>}{\underset{<}{=}} \lambda_0 $ and
\[
h(x,\lambda) = h_k(x,\lambda) +o(|x|^k), \quad C_1 |x|^{k+1}\leq (h_k(x,\lambda),x),
\]
for some odd integer $k\geq3$ where $h_k(\cdot, \lambda)$  is a k-multilinear field, and $C_1>0$ is some constant. Then the system
\[
dx/dt = v_{\lambda}(x), \quad x\in \mathbb{R},
\]
bifurcates from $(x,\lambda) = (0,\lambda_0)$ to an attractor $\Sigma_{\lambda}$, which is homeomorphic to $S^1$, for $\lambda_0<\lambda<\lambda_0 +\epsilon$, for some $\epsilon>0$. Moreover, either (i) $\Sigma_{\lambda}$ is a periodic orbit, or (ii) $\Sigma_{\lambda}$ consists of an infinite number of singular points, or, (iii) $\Sigma_{\lambda}$ contains at most $2(k+1)$ singular points, consisting of $2N$ saddle points, $2N$ stable node points and $n(\leq 2(k+1) -4N)$ singular points with index zero.
\end{theorem}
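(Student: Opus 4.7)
The strategy is to reduce to a planar system in polar coordinates, construct a trapping annulus in which every trajectory is captured, and then combine Poincar\'e--Bendixson theory with an equilibrium count to classify the attractor. Writing $x = (r\cos\theta, r\sin\theta)$ and using $(h_k(x,\lambda), x) \geq C_1 |x|^{k+1}$ together with $h = h_k + o(|x|^k)$, the radial equation satisfies
\[
\dot r = \beta(\lambda) r - r^{-1}(h(x,\lambda), x) \leq \beta(\lambda) r - C_1 r^k + o(r^k).
\]
For $\lambda - \lambda_0 > 0$ sufficiently small I would choose radii $0 < r_- < r_+$ both of order $\beta(\lambda)^{1/(k-1)}$ such that $\dot r > 0$ on $\{r = r_-\}$ and $\dot r < 0$ on $\{r = r_+\}$. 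The annulus $\mathcal A_\lambda = \{r_- \leq r \leq r_+\}$ is then positively invariant, contains the origin as a strict source, and both radii shrink to zero as $\lambda \to \lambda_0^+$, so any attractor inside $\mathcal A_\lambda$ bifurcates from $(0, \lambda_0)$.

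Let $\Sigma_\lambda = \bigcap_{t \geq 0} \phi_t(\mathcal A_\lambda)$ be the maximal compact invariant set, where $\phi_t$ is the flow. To identify $\Sigma_\lambda$ topologically with $S^1$ I would rescale $\tilde x = x / \beta(\lambda)^{1/(k-1)}$ and $\tau = \beta(\lambda) t$, which converts the system on $\mathcal A_\lambda$ to
\[
\frac{d\tilde x}{d\tau} = \tilde x - h_k(\tilde x, \lambda) + o(1)
\]
on a fixed, $\lambda$-independent annulus. The leading polynomial field is autonomous and homogeneous; its maximal compact invariant set inside the fixed annulus is a compact set separating the origin from the outer boundary and, by planar arguments (Jordan curve theorem together with the fact that every trajectory converges to an $\omega$-limit in the annulus), is homeomorphic to $S^1$. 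Persistence of this topology under the $o(1)$ remainder then yields $\Sigma_\lambda \cong S^1$ for $\lambda$ close to $\lambda_0$.

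For the trichotomy I would apply Poincar\'e--Bendixson on $\mathcal A_\lambda$: every $\omega$-limit set is a periodic orbit (case (i)), a continuum of equilibria (case (ii)), or a graph of finitely many equilibria joined by heteroclinic orbits (case (iii)). To bound the equilibrium count, I write the fixed-point equation $\beta(\lambda) x = h_k(x) + o(|x|^k)$ in polar form: $k$-multilinearity gives $h_k(r\cos\theta, r\sin\theta) = r^k (p(\theta), q(\theta))$ with $p, q$ trigonometric polynomials of degree $k$, and the parallelism condition $p(\theta)\sin\theta - q(\theta)\cos\theta = 0$ is a trigonometric polynomial of degree $k+1$ with at most $2(k+1)$ zeros in $[0, 2\pi)$. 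Along $\Sigma_\lambda \cong S^1$, hyperbolic equilibria must alternate between tangentially attracting (stable nodes in 2D, index $+1$) and tangentially repelling (saddles in 2D, index $-1$); the symmetry $v_\lambda(-x) = -v_\lambda(x)$ (from $k$ odd) pairs each equilibrium with its negative, forcing both counts to be even. This yields $2N$ saddles, $2N$ stable nodes, and at most $n \leq 2(k+1) - 4N$ degenerate (index-zero) equilibria.

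The principal obstacle I expect is the rigorous topological identification $\Sigma_\lambda \cong S^1$ in case (iii), when degenerate equilibria and multiple heteroclinic orbits coexist and could in principle produce figure-eights, nested graphs, or two-dimensional limit configurations. The rescaling above isolates this difficulty at the leading-order polynomial system, where structural stability of its hyperbolic skeleton together with continuous dependence of stable and unstable manifolds on the $o(1)$ perturbation transfers the $S^1$ topology to the full system; the uniform trapping estimate ensures this transfer is valid on the whole interval $(\lambda_0, \lambda_0 + \epsilon)$.
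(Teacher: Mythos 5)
First, a point of comparison: the paper does not prove this statement at all. Theorem~\ref{S1biftheorem} is imported verbatim from Ma--Wang \cite{bifbook} and is used as a black box in Step 3 of the proof of the main theorem, so there is no in-paper proof to measure your attempt against. What follows is an assessment of your sketch on its own terms.

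Your trapping-annulus construction and the rescaling $\tilde x = x/\beta(\lambda)^{1/(k-1)}$, $\tau=\beta(\lambda)t$ are sound and standard, but the two places where the theorem actually has content are not closed. (1) The identification $\Sigma_\lambda\cong S^1$: your plan is to prove it for the truncated homogeneous field $\tilde x - h_k(\tilde x,\lambda)$ and transfer it by ``structural stability of the hyperbolic skeleton.'' This cannot work as stated, because the conclusion of the theorem explicitly allows $\Sigma_\lambda$ to be a periodic orbit, a continuum of singular points, or to carry index-zero degenerate equilibria; in those cases the limiting field has no hyperbolic skeleton and is not structurally stable, so the persistence step collapses exactly where it is needed. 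Moreover, the claim that the maximal compact invariant set of the truncated field inside the annulus is a topological circle is itself the hard part and is only asserted: a compact invariant continuum separating the two boundary circles could a priori be, say, a circle with an interior heteroclinic chord (a ``theta''), and ruling such configurations out requires identifying $\Sigma_\lambda$ with the boundary of the unstable set of the origin and a careful analysis of planar $\omega$-limit sets, not just the Jordan curve theorem. (2) The equilibrium count: you invoke the symmetry $v_\lambda(-x)=-v_\lambda(x)$ to get the evenness $2N$, but the hypotheses only make $h_k$ odd; the $o(|x|^k)$ remainder need not be odd, so this symmetry is not available. Similarly, the bound of $2(k+1)$ zeros of the parallelism trigonometric polynomial applies to $h_k$ alone; for the full field the equation is only a $C^0$-small perturbation of that polynomial, and a perturbation can create extra zeros near degenerate ones, so an additional argument is required. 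The robust route to the saddle/node balance is index theory: the field points inward on the outer circle, so the indices of all zeros in the disk sum to $+1$; the origin contributes $+1$; hence the indices on $\Sigma_\lambda$ sum to $0$ and the number of saddles equals the number of stable nodes. As written, your proposal establishes the trapping region and the heuristic picture but not the circle homeomorphism or the stated singular-point count.
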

Now let
\[
h(y_1,y_2) = \left\{ y_1 (a_{11} y_1^2 + a_{13} y_2^2), \, y_2 (a_{22} y_1^2 + a_{24} y_2^2)\right\}^T.
\]

\begin{lemma} \label{S1lemma}
Assume that $\Phi_i \neq 0$ for $i=1,2,3$. Then for any $y=(y_1,y_2)$,
\begin{equation} \label{(hy,y)}
(h(y), y) = a_{11} y_1^4 +(a_{13} + a_{22}) y_1^2 y_2^2 + a_{24} y_2^4 \leq C |y|^4,
\end{equation}
where $C<0$.
\end{lemma}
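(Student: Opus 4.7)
The plan is to express each of the three coefficients appearing in $(h(y),y)$ as an inner product of $L_R$ acting on a center manifold function, and then to exploit the strict negative definiteness of $L_R$ on the complement $E_2$ of the critical eigenspace.

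The first step is to derive the identities
\[
a_{11} = (L_R \Phi_1, \Phi_1), \qquad a_{24} = (L_R \Phi_3, \Phi_3), \qquad a_{13} + a_{22} = 2(L_R \Phi_1, \Phi_3) + (L_R \Phi_2, \Phi_2).
\]
The key algebraic fact is the skew-symmetry $G(\phi, \psi_1, \psi_2) + G(\phi, \psi_2, \psi_1) = 0$, which follows from the divergence-free condition on $\phi$ and the vanishing boundary conditions on both velocity and temperature, and in particular implies $G(\phi, \psi, \psi) = 0$. For $a_{11}$, the term $G(\Phi_1, \phi_1, \phi_1)$ in $G_s(\phi_1, \Phi_1, \phi_1)$ vanishes, and the remaining term equals $-(G(\phi_1, \phi_1), \Phi_1)$, which by Lemma~\ref{P2Glemma} and the center manifold equation \eqref{2dcmequ2} becomes $(L_R \Phi_1, \Phi_1)$; the identity for $a_{24}$ is structurally identical. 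For $a_{13} + a_{22}$, the two blocks $G_s(\phi_1, \Phi_3, \phi_1)$ and $G_s(\phi_2, \Phi_1, \phi_2)$ reduce by the same manipulation to $(L_R \Phi_1, \Phi_3)$ each; of the four remaining terms containing $\Phi_2$, the two in which $\Phi_2$ occupies the first slot of $G$ cancel by skew-symmetry, while the other two combine to $-(G(\phi_1,\phi_2)+G(\phi_2,\phi_1), \Phi_2) = (L_R \Phi_2, \Phi_2)$ via the center manifold equation for $\Phi_2$.

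With these identities in hand, a direct rearrangement gives
\[
(h(y), y) = (L_R \tilde\Phi, \tilde\Phi) + y_1^2 y_2^2 (L_R \Phi_2, \Phi_2), \qquad \tilde\Phi := y_1^2 \Phi_1 + y_2^2 \Phi_3.
\]
Since $\phi_1, \phi_2$ are the only critical eigenmodes, the restriction of $L_R$ to $E_2$ is strictly negative definite for $R$ in a neighborhood of $R_c$ with a uniform spectral gap $\lambda_3 > 0$; as $\tilde\Phi, \Phi_2 \in E_2$, both summands on the right are non-positive.

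To upgrade this to a strict inequality I would split on whether $y_1 y_2 = 0$. If $y_1 y_2 \neq 0$, the second summand is at most $-\lambda_3 \|\Phi_2\|^2 y_1^2 y_2^2 < 0$ using the hypothesis $\Phi_2 \neq 0$; if exactly one of $y_1, y_2$ vanishes, then $\tilde\Phi$ is a nonzero scalar multiple of $\Phi_1$ or $\Phi_3$, and the first summand is strictly negative by $\Phi_1, \Phi_3 \neq 0$. Hence $(h(y), y) < 0$ for every $y \neq 0$, and since $(h(y), y)$ is homogeneous of degree four, compactness of the unit circle delivers a constant $C < 0$ with $(h(y), y) \leq C |y|^4$. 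I expect the primary obstacle to be the bookkeeping for the cross coefficient $a_{13} + a_{22}$: correctly labeling the six $G$-terms that survive after $G(\phi, \psi, \psi) = 0$ is applied, and pairing them to extract the clean expression $2(L_R \Phi_1, \Phi_3) + (L_R \Phi_2, \Phi_2)$, is where a sign or index slip would be most likely; once that is done, the rest is a routine spectral argument on $E_2$.
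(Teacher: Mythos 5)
Your proposal is correct, and its first half coincides with the paper's own computation: the identities $a_{11}=(\mathcal{L}_R\Phi_1,\Phi_1)$, $a_{24}=(\mathcal{L}_R\Phi_3,\Phi_3)$ and $a_{13}+a_{22}=2(\mathcal{L}_R\Phi_1,\Phi_3)+(\mathcal{L}_R\Phi_2,\Phi_2)$ are exactly what the paper derives (it records the cross term as $(\mathcal{L}_R\Phi_1,\Phi_3)+(\mathcal{L}_R\Phi_3,\Phi_1)+\alpha$ with $\alpha=(\mathcal{L}_R\Phi_2,\Phi_2)$), using the same skew-symmetry $G(\phi,\tilde{\phi},\phi^{\ast})=-G(\phi,\phi^{\ast},\tilde{\phi})$ and the center manifold equations \eqref{2dcmequ2}; your statement of the skew form is the correct one, whereas \eqref{navstoknon}(i) as printed carries a sign typo. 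Where you genuinely depart from the paper is the final estimate. The paper splits on the sign of $a_{13}+a_{22}$, bounds the cross term by Cauchy--Schwarz, $(\mathcal{L}_R\Phi_1,\Phi_3)\leq\sqrt{a_{11}a_{24}}$, uses the strictly negative $\alpha$ to create room, chooses $0<\epsilon_1<-a_{11}$, $0<\epsilon_2<-a_{24}$, and closes with $2ab\leq a^2+b^2$, which yields the explicit constant $C=\max\{-\epsilon_1,-\epsilon_2\}$. You instead regroup $(h(y),y)=(\mathcal{L}_R\tilde{\Phi},\tilde{\Phi})+y_1^2y_2^2(\mathcal{L}_R\Phi_2,\Phi_2)$ with $\tilde{\Phi}=y_1^2\Phi_1+y_2^2\Phi_3\in E_2$, which is manifestly non-positive by the negative definiteness of $\mathcal{L}_R$ on $E_2$ (this single identity encodes the paper's Cauchy--Schwarz step), then get strict negativity for $y\neq 0$ from the case split on $y_1y_2$ together with $\Phi_i\neq 0$, and produce $C<0$ by degree-four homogeneity and compactness of the unit circle. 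Both routes are sound; yours is shorter, avoids the case distinction and the $\epsilon$ bookkeeping, while the paper's argument stays at the level of the scalar coefficients and delivers an explicit constant, which is convenient given that the $a_{ij}$ are subsequently computed numerically. The only point to keep in mind in your compactness step is that $C$ depends on $R$ through the coefficients, but the same dependence is present in the paper's choice of $\epsilon_1,\epsilon_2$, so this is not a gap.
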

\begin{proof}
\begin{equation} \label{a11a24}
\begin{split}
a_{11} &=  G_s(\phi_1,\Phi_1,\phi_1)  =  G(\phi_1,\Phi_1,\phi_1) + G(\Phi_1, \phi_1, \phi_1) \\
& = G(\phi_1,\Phi_1,\phi_1)= -G(\phi_1,\phi_1,\Phi_1) =  (\mathcal{L}_R \Phi_1, \Phi_1).
\end{split}
\end{equation}
Here we used \eqref{2dcmequ2} and the following properties of Navier-Stokes nonlinearity
\begin{equation} \label{navstoknon}
(i)\, G(\phi,\tilde{\phi},\phi^{\ast})= G(\phi,\phi^{\ast},\tilde{\phi}), \qquad 
(ii)\, G(\phi,\tilde{\phi},\tilde{\phi})= 0,
\end{equation}
and $-\mathcal{L}_R \Phi_1 = G(\phi_1,\phi_1)$ which is due \eqref{2dcmequ2}. 

If we write
\[
\Phi_j = \sum_{k=3}^{\infty} c_{j,k} \phi_k, \quad j =1,2,3,
\]
then for $j=1,2,3$, 
\begin{equation} \label{LRPjPj}
(\mathcal{L}_R \Phi_j, \Phi_j) = \sum_{k=3}^{\infty} c_{j,k}^2 \beta_k || \phi_k ||^2 <0.
\end{equation}
since $\beta_k<0$ for $k\geq3$ and by assumption there exists $k\geq 3$ such that $c_{1,k} \neq0$.  In particular, $a_{11}<0$. As in \eqref{a11a24}, we can show that
\[
a_{24} = G_s(\phi_2,\Phi_3,\phi_2) = (\mathcal{L}_R \Phi_3, \Phi_3)<0.
\]

Now if $a_{13}+a_{22}<0$ then it is easy to prove \eqref{(hy,y)}. Assume otherwise. Using \eqref{navstoknon} and \eqref{2dcmequ2}, we can write 
\begin{equation} \label{a13}
\begin{split} a_{13} & =  G_s(\phi_1,\Phi_3,\phi_1) + G_s (\phi_2,\Phi_2,\phi_1) \\
& = G(\phi_1,\Phi_3,\phi_1) + G(\Phi_3,\phi_1,\phi_1) + G_s (\phi_2,\Phi_2,\phi_1) \\
& = - (G(\phi_1),\Phi_3) + G_s (\phi_2,\Phi_2,\phi_1) \\
& = (\mathcal{L}_R \Phi_1,\Phi_3) + G_s (\phi_2,\Phi_2,\phi_1). \\
\end{split}
\end{equation}
A similar computation shows
\begin{equation} \label{a22}
a_{22}  =  (\mathcal{L}_R \Phi_3,\Phi_1) + G_s (\phi_1,\Phi_2,\phi_2).
\end{equation}
Let us define
\begin{equation} \label{alpha}
\alpha = G_s (\phi_1,\Phi_2,\phi_2)+ G_s (\phi_2,\Phi_2,\phi_1).
\end{equation}
By \eqref{navstoknon} and \eqref{2dcmequ2},
\begin{equation} \label{alphaineq}
\alpha  = -(G(\phi_1,\phi_2)+G(\phi_2,\phi_1),\Phi_2) = (\mathcal{L}_R \Phi_2, \Phi_2)
\end{equation}
Note that $\alpha<0$ by \eqref{LRPjPj}. Using Cauchy-Schwarz inequality and the orthogonality of the eigenfunctions,
\begin{equation} \label{LRP1P3}
\begin{split}
(\mathcal{L}_R \Phi_1,\Phi_3)& = \sum_{k=3}^{\infty} \beta_k c_{1,k} c_{3,k} ||\phi_k||^2 \\
 & \leq  \left(\sum_{k=3}^{\infty} -\beta_k c_{1,k}^2  ||\phi_k||^2 \right)^{1/2} \left(\sum_{k=3}^{\infty} -\beta_k c_{3,k}^2  ||\phi_k||^2 \right)^{1/2} \\
 & = \sqrt{a_{11}a_{24}}.
 \end{split}
\end{equation}
Since,  $(\mathcal{L}_R \Phi_3,\Phi_1)=(\mathcal{L}_R \Phi_1,\Phi_3)$, we have  by \eqref{a13}--\eqref{LRP1P3},
\[
 a_{13} + a_{22} < 2 \sqrt{a_{11}a_{24}} + \alpha,
\]
where $\alpha<0$ is given by \eqref{alpha}. Thus, there exists $0<\epsilon_1<-a_{11}$, $0<\epsilon_2<-a_{24}$ such that
\[
 a_{13} + a_{22} < 2 \sqrt{a_{11}a_{24}} + \alpha < 2 \sqrt{(a_{11}+\epsilon_1) (a_{24}+\epsilon_2)}.
\]
Since $2 ab <a^2 +b^2$, we have,
\[
2 \sqrt{(a_{11}+\epsilon_1) (a_{24}+\epsilon_2)}y_1^2 y_2^2\leq -(a_{11}+\epsilon_1) y_1^4 - (a_{24}+\epsilon_2) y_2^4.
\]
Now, let $C =\max \{ -\epsilon_1,-\epsilon_2\}$. Then $C<0$ and we have
\[
(h(y),y) \leq a_{11} y_1^4 +(a_{13} +a_{22}) y_1^2y_2^2 +a_{24} y_2^4 \leq C(x^2 + y^2)^2.
\]
That finishes the proof.
\end{proof}

Thus  by Theorem~\ref{S1biftheorem} and Lemma~\ref{S1lemma}, $\Sigma_{R}$ is homeomorphic to $S^1$. Now we will describe the details of its structure by determining the bifurcated steady states and their stabilities.

{\bf STEP 4. The steady states and their stabilities.}
The possible equilibrium solutions of the truncated equations of \eqref{transequ} are as follows.
\begin{equation} \label{steadystates}
R_1 = (\sqrt{\frac{\beta_1}{-a_{11}}} , 0), \, R_2 = (0, \sqrt{\frac{\beta_2}{-a_{24}}}), \, M = (\sqrt{\frac{D_2}{D_3}},\sqrt{\frac{D_1}{D_3}}),
\end{equation}
where $D_1$, $D_2$ and $D_3$ are given by \eqref{D1D2D3}.

Due to the invariance of the equations \eqref{transequ} with respect to $(x,y) \rightarrow (-x,y)$ and $(x,y) \rightarrow (x,-y)$, we only consider the positive solutions when writing \eqref{steadystates}.

The eigenvalues $\lambda_1$, $\lambda_2$ of the truncated vector field at the steady states $R_1$, $R_2$ are
\begin{equation*}
\begin{aligned}
& \lambda_1^{R_1} =  -2 \beta_1, \, \lambda_2^{R_1}  = -D_1/a_{11}, \, \lambda_1^{R_2} =  -2 \beta_2, \, \lambda_2^{R_2}  = -D_2/a_{24}.
\end{aligned}
\end{equation*}

Note that $R_i$ is always bifurcated for $\beta_i>0$, $i=1,2$. Moreover $R_i$ is a stable steady state for $\beta_i>0$ if $D_i<0$ for $i=1,2$.
The trace $Tr$ and the determinant $Det$ of the Jacobian matrix of the truncated vector field at the mixed states $M$ are
\begin{equation} \label{stabilityassignments}
Tr = \frac{2}{D_3} (a_{24} D_1 + a_{11} D_2), \qquad Det = \frac{4}{D_3} D_1 D_2. 
\end{equation}
Notice that the steady states $M$ are bifurcated only when  $D_1$, $D_2$, $D_3$ have the same sign. Since $a_{11}$ and $a_{24}$ are both negative as shown in Lemma 5.7, according to trace-determinant plane analysis, they are saddles if $D_1<0$, $D_2<0$, $D_3<0$ and are stable if $D_1>0$, $D_2>0$, $D_3>0$.

Finally, only the four cases stated in our main theorem can occur. To see this, note that according to the Theorem~\ref{S1biftheorem} and \eqref{steadystates}--\eqref{stabilityassignments}, the case $D_1<0$, $D_2<0$, $D_3>0$ is not possible since that would lead to only 4 steady states on the attractor which are all stable. Similarly the case $D_1>0$, $D_2>0$, $D_3<0$ is not possible either which would lead to 4 steady states which are all unstable.

\section{Numerical approximation of the coefficients of the reduced equations}
To compute the coefficients of the reduced equations \eqref{transequ}, we fix $L$, $\Pr$ and $R$  to compute all the eigenvalues $\beta_i^N$ and the corresponding eigenvectors of \eqref{rb2d4}.

{\bf Numerical computation of the center manifold functions.} Now we will numerically approximate $\Phi_1$,  $\Phi_2$ and $\Phi_3$ which are the solutions of the equations \eqref{2dcmequ}. We will illustrate the method to approximate $\Phi_1$ since $\Phi_2$, $\Phi_3$ can be approximated similarly. 
To determine $\Phi_1$, we have to find its stream function $\Psi$ and its temperature function $\Theta$ which are determined by equations \eqref{2dcmequ4}.

Since we do not have $h_1$ and $h_2$ in \eqref{2dcmequ4} exactly, we approximate them by $h_1^N$, $h_2^N$ as below
\begin{equation} \label{h1N-h2N}
\begin{aligned}
& h_1^N = -\frac{\partial g_1^N}{\partial z}+\frac{\partial g_2^N}{\partial x}, && h_2^N = \psi^N_{1,z}\theta^N_{1,x} - \psi^N_{1,x} \theta^N_{1,z},\\
& g_1^N = -\psi^N_{1,z}\psi^N_{1,xz} + \psi^N_{1,x} \psi^N_{1,zz}, && g_2^N = -\psi^N_{1,z}\psi^N_{1,xx} + \psi^N_{1,x} \psi^N_{1,xz}.
\end{aligned}
\end{equation}
Here ($\psi_1^N$, $\theta_1^N$) is the first critical eigenfunction of the discrete problem \eqref{rb2d4}.
\begin{equation} \label{2dcmequ4.5}
 \{\psi_1^N, \theta_1^N\}  = \sum_{m=0}^{N_x-1} \sum_{n=0}^{N_z-1}\left\{ \tilde{\psi}^N_{1,mn} e_m(x) e_n(z), \,\tilde{\theta}^N_{1,mn} f_m(x) f_n(z) \right\}.
\end{equation}

{\bf The Legedre-Galerkin approximation of the problem \eqref{2dcmequ4}.}
As in the linear eigenvalue problem, we discretize the equations \eqref{2dcmequ4} using the generalized Jacobi polynomials \eqref{J11}--\eqref{J22}.
\begin{equation} \label{rbcmeq2}
\{\Psi^N, \Theta^N \} = \sum_{m=0}^{N_x-1} \sum_{n=0}^{N_z-1} \left \{\tilde{\Psi}^N_{mn} e_m(x) e_n(z), \,  \tilde{\Theta}^N_{mn} f_m(x) f_n(z) \right\}.
\end{equation}

We plug in $\Psi^N$, $\Theta^N$, $h_1^N$, $h_2^N$ for $\Psi$, $\Theta$, $h_1$, $h_2$ in \eqref{2dcmequ4} and multiply the resulting equations by Jacobi polynomials $e_j(x)e_k(z)$, $f_j(x)f_k(z)$ and integrate over $-1\leq x\leq1$, $-1\leq z\leq1$ to reduce \eqref{2dcmequ4} to the following finite dimensional linear equation
\begin{equation} \label{rbcmeq3}
(B^N - \sqrt{R} \, C^N)\bar{x} = \bar{b}.
\end{equation}
Here $B^N$ and $C^N$ are given by \eqref{rb2d5} and
\begin{equation} \label{bbar}
\bar{x} =  \mat{\text{vec}(\tilde{\Psi}^N) &  \text{vec}(\tilde{\Theta}^N) }^T_{N^2\times1}, \quad
\bar{b} = \mat{ \text{vec}(B_1)  &  \text{vec}(B_2)}^T.
\end{equation}
For $0\leq j \leq N_x-1$, $0\leq k \leq N_z -1$,
\begin{equation} \label{B1-B2}
\begin{aligned}
& (B_1)_{jk} = \int_{-1}^1\int_{-1}^1 h_1^N(x,z) e_j(x)e_k(z) dx dz,\\
& (B_2)_{jk} = \int_{-1}^1\int_{-1}^1\ h_2^N(x,z) f_j(x)f_k(z)dx dz.
\end{aligned}
\end{equation}
Now $e_j$ is a polynomial of degree $j+4$ and by \eqref{h1N-h2N} and \eqref{2dcmequ4.5}, $h_i^N$ is a polynomial of degree at most $(2N_x +6,2N_z+6)$. Thus the above integrands are of degree at most $(3 N_x+9, 3 N_z+9)$. Since the  Legendre-Gauss-Lobatto quadrature with $N+1$ quadrature points is exact for polynomials of degree less or equal than $2N-1$,  the integrals in \eqref{B1-B2} can be replaced by the following discrete inner products.
\begin{equation} \label{B1-B2num}
\begin{aligned}
& (B_1)_{jk} = \sum_{m=0}^{\frac{3}{2}N_x +5} \sum_{n=0}^{\frac{3}{2}N_z +5} h_1^N(x_m,z_n) e_j(x_m) e_k(z_n) \omega^x_m \omega^z_n,\\
& (B_2)_{jk} = \sum_{m=0}^{\frac{3}{2}N_x +5} \sum_{n=0}^{\frac{3}{2}N_z +5} h_2^N(x_m,z_n) f_j(x_m) f_k(z_n) \omega^x_m \omega^z_n.
\end{aligned}
\end{equation}
Here  $\{x_j,w^x_j\}_{j=0}^{\frac{3}{2}N_x +5}$ and $\{z_j,w^z_j\}^{\frac{3}{2}N_z +5}_{j=0}$ are the  Legendre-Gauss-Lobatto points and weights in the x-direction and the z-direction.


{\bf Solution of \eqref{rbcmeq3}.}
The solution $\bar{x}$ of \eqref{rbcmeq3} can be obtained by inverting the matrix $(B^N - \sqrt{R} \, C^N)$. But this matrix has a large condition number. Thus we show a method to obtain the solution inverting the matrix $D^N$ given by \eqref{rb2d5} which has a much smaller condition number. For example, for $N_x =10$, $N_z =8$, the condition number of $(B^N - \sqrt{R} \, C^N)$ is $O(10^{16})$ while the condition number of $D^N$ is $O(10^8)$. 

Since $\Phi_1 \in E_2=\text{span}\{\phi_1,\phi_2\}^{\perp}$, we look for a solution of \eqref{rbcmeq3} in the form
\begin{equation} \label{rb2dsolexp}
\bar{x} = \sum_{i =3}^N x_i \bar{x}_i,
\end{equation}
where $\bar{x}_i$ are the eigenvectors of 
\begin{equation} \label{nonsoleig}
B^N\bar{x}_i - \sqrt{R} C^N \bar{x}_i = \beta_i(R) D^N\bar{x}_i.
\end{equation}
If we multiply \eqref{rbcmeq3} by $(D^N)^{-1}$ and use \eqref{nonsoleig}, the left hand side of \eqref{rbcmeq3} becomes
\begin{equation} \label{rbcmeq3lhs}
 \sum_{i =3}^N x_i \beta_i(R) \bar{x}_i =(D^N)^{-1}(B^N-\sqrt{R} C^N) \bar{x} =(D^N)^{-1} \bar{b}:=\bar{f}.
\end{equation}
We determine $\bar{f}$ from $D^N\bar{f} = \bar{b}$ using Gaussian elimination. Once again using Gaussian elimination, we can find the coefficients $f_i$ in the expansion
\begin{equation} \label{rb2dbarb}
\bar{f} = \sum_{i = 1}^N f_i \bar{x}_i.
\end{equation}
In \eqref{rb2dbarb}, we see that $f_1 = f_2 = 0$ is necessary for the existence of a solution of \eqref{rbcmeq3}. From \eqref{rbcmeq3lhs} and \eqref{rb2dbarb}, one finds $x_i = f_i/\beta_i(R)$, $i =3,4,\dots, N$.
Thus  the Jacobi expansion coefficients in \eqref{rbcmeq2} of the center manifold are given by
\[
\text{vec}(\tilde{\Psi}^N) =\sum_{i=3}^N \frac{f_i}{\beta_i(R)} \text{vec}(\tilde{\psi}^N_i),  \qquad \text{vec}(\tilde{\Theta}^N) =\sum_{i=3}^N \frac{f_i}{\beta_i(R)} \text{vec}(\tilde{\theta}^N_i).
\]

{\bf Numerical computation of $a_{ij}$ in \eqref{a_ij}.}
We approximate $a_{11}$ by
\begin{equation} \label{approxa_11}
a_{11}^N =  G_s(\phi_1^N,\Phi_1^N,\phi_1^N).
\end{equation}
The integrands in $G_s(\phi_1^N,\Phi_1^N,\phi_1^N)$ are polynomials of degree at most $(3N_x+9,3N_z+9)$. Thus to replace the integrals in \eqref{approxa_11}, one needs again $(\frac{3}{2}N_x+5,\frac{3}{2}N_z+5)$ quadrature points and nodes in the numerical inner product. The other coefficients $a_{ij}$ in \eqref{a_ij} are approximated similarly.
\begin{remark}
We observed that increasing $N_x$ and $N_z$ above $N_x = 10+ 2k$ and $N_z = 8$ only changes $a_{ij}^N$ in the seventh digit when the first critical mode which has $k$ rolls and the second critical mode has $k+1$ rolls in their stream functions. \end{remark}
\section{Numerical Results and Discussion}
We computed coefficients of the reduced equations for various Pr values ranging from $0.1$ to $10^3$ at the first three critical length scales and at the critical Rayleigh numbers which are given in Table~\ref{asymp}. 
 
As proved in Theorem~\ref{mainthm1}, the coefficients $a_{11}$ and $a_{24}$ are always negative. In our numerical calculations, we encountered that $a_{13}$ is also always negative. But the sign of $a_{22}$ and the sign of $D_3$ depends on $L$ and Pr and are given in Table~\ref{coefficients}.
 
For the first critical length scale $L_c = 1.5702$, we found that $a_{22}$ and $D_3$ changes sign from positive to negative between $0.04<\Pr<0.05$ and $0.14<\Pr<0.15$ respectively. Thus the transition is as described in Figure~\ref{Transition Scenarios2}(a) for $\Pr<0.04$, as in Figure~\ref{Transition Scenarios2}(b) for $0.05 < \Pr<0.14$ and as in Figure~\ref{Transition Scenarios2}(c) for $\Pr>0.15$. Thus the mixed modes can be stable when $\Pr<0.14$ but only the pure modes are stable points of the attractor when $\Pr>0.15$.

For the second critical length scale $L_c = 2.6611$, we always observed that $a_{22}<0$. However, $D_3$ changes sign between $0.05<\Pr<0.06$. Thus the transition is as described in Figure~\ref{Transition Scenarios2}(b) for $\Pr<0.05$ and as described in Figure~\ref{Transition Scenarios2}(c) for $\Pr>0.06$. In particular,  the mixed modes can be stable when $\Pr<0.05$ but only the pure modes are stable steady states when $\Pr>0.06$.

For higher critical length scales (third and beyond), we found that $a_{22}<0$ and $D_3<0$ for the Prandtl numbers we considered. Thus the transition is as described in Figure~\ref{Transition Scenarios2}(c). For this length scale, either the critical Prandtl number that was observed for the first two critical length scales is now very close to zero or it does not exist at all.

The above analysis depends on the coefficients $a_{ij}$ of the reduced equations and predicts the transitions when both eigenvalues $\beta_1$, $\beta_2$ are close to zero. Now we present an analysis depending on the direct computation of the numbers $D_1$, $D_2$ (both of which vanish when $\beta_1 = \beta_2 =0$)  and $D_3$.  We computed $D_1$, $D_2$ and $D_3$ for $L$ and $R$ values around (but not necessarily very close to) the criticality $(L,R)=(L_c,R_c)$ for the first three critical length scales and for Prandtl numbers $\Pr=0.1,\, 0.71,\, 7,\, 130$. The results are shown in Figure~\ref{Pr-L-R}. Although we might have omitted the smallness assumptions of $|L-L_c|$ and $|R-R_c|$ where our main theorem is valid, these figures help us predict the transitions in the $L-R$ plane. The results we obtain are as follows.

For $\Pr = 0.71$, $\Pr= 7$, $\Pr=130$, transitions are qualitatively same in the $L$--$R$ plane. For $L>L_c$, the basic motionless state loses its stability to the eigenmode with wavenumber $k+1$ as the Rayleigh number crosses the first critical Rayleigh number and further increase of the Rayleigh number does not alter the stability of this steady state. This is in contrast to the situation $L<L_c$ where there is a transition of stabilities as the Rayleigh number is increased. Namely, as the Rayleigh number crosses the first critical Rayleigh number, the eigenmode with wavenumber $k$ becomes stable. As the Rayleigh number is further increased, both eigenmodes coexist as stable steady states and the initial conditions determine which one of these steady states will be realized. Finally as the Rayleigh number is further increased, the eigenmode with wavenumber $k+1$ becomes stable. 

The transition at $\Pr =0.1$ is essentially different than for those at $\Pr =0.71,\, 7,\, 130$. In particular, for the first critical length scale $L_c = 1.5702$, for $L<L_c$, subsequently mode with wavenumber $k$, mixed modes and finally mode with wavenumber $k+1$ will be realized as the  Rayleigh number is increased while for $L>L_c$,  $k+1$ mode is the only stable steady state.

\begin{figure}
\centering
\includegraphics[scale=.25]{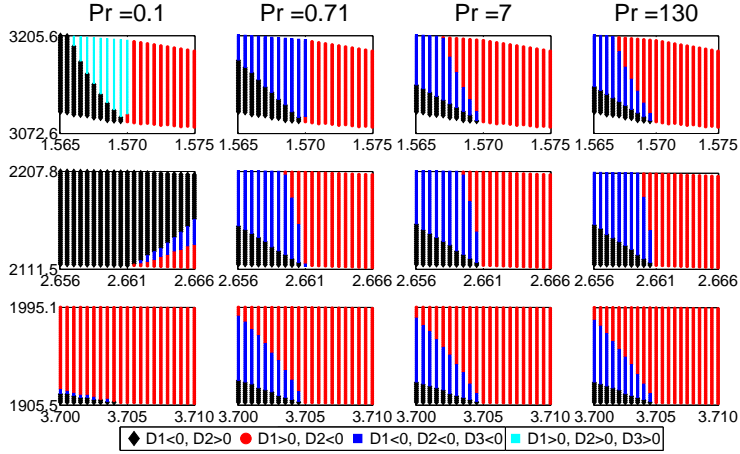}
\caption{\label{Pr-L-R} The signs of $D_1$, $D_2$ and $D_3$ in the $L$--$R$ plane. In each subfigure, the x and y axes denote the length scale L and the Rayleigh number R respectively. For each column, the Prandtl number is given above.}
\end{figure}

\begin{table}
\begin{tabular}{|l|l|l|l|l|l|l|}
\hline
& \multicolumn{2}{|c|}{$L_c = 1.5702$}  & \multicolumn{2}{|c|}{$L_c = 2.6611$} & \multicolumn{2}{|c|}{$L_c = 3.7048$} \\ \hline
\textbf{Pr}&\textbf{a$_{22}\times 10^2$}&\textbf{D$_3\times 10^5$}&\textbf{a$_{22}\times 10^2$}&\textbf{D$_3\times 10^5$}&\textbf{a$_{22}\times 10^2$}&\textbf{D$_3\times 10^5$}\\\hline
0.01&7.6492&2559.8276&-23.7252&140.2894&-20.5185&-15.6696\\\hline
0.04&0.1313&93.3662&-6.0784&3.6690&-5.3093&-1.4416\\\hline
0.05&-0.4546&50.5609&-5.0374&1.0455&-4.4332&-1.1525\\\hline
0.06&-0.8740&29.9252&-4.3902&-0.2713&-3.8966&-1.0122\\\hline
0.14&-2.3697&0.6458&-3.0486&-2.3788&-2.8749&-0.9899\\\hline
0.15&-2.4662&-0.0707&-3.0218&-2.4414&-2.8641&-1.0182\\\hline
0.71&-3.2469&-9.5827&-2.8940&-2.5437&-2.8848&-1.4316\\\hline
7&-0.8551&-0.8348&-0.7484&-0.1610&-0.7444&-0.0987\\\hline
100&-0.0687&-0.0055&-0.0599&-0.0010&-0.0595&-0.0006\\\hline
1000&-0.0069&-0.0001&-0.0060&-0.00001&-0.0060&-0.000006\\\hline
\end{tabular}
\caption{\label{coefficients}The coefficients of the reduced equations for various Pr values at the first three critical length scales.}
\end{table}

\bibliographystyle{amsalpha}
\providecommand{\bysame}{\leavevmode\hbox to3em{\hrulefill}\thinspace}
\providecommand{\MR}{\relax\ifhmode\unskip\space\fi MR }
\providecommand{\MRhref}[2]{%
  \href{http://www.ams.org/mathscinet-getitem?mr=#1}{#2}
}
\providecommand{\href}[2]{#2}

\end{document}